\newif\iffull
\fulltrue

\title{On Range Searching with Semialgebraic Sets II\thanks{%
	Work by Pankaj Agarwal was supported by NSF under
	grants IIS-07-13498, CCF-09-40671, CCF-10-12254, and CCF-11-61359,
	by ARO grants W911NF-07-1-0376 and W911NF-08-1-0452,
	and by an ARL award W9132V-11-C-0003.
  	Work by Ji\v{r}\'{\i} Matou\v{s}ek has been
  	supported by the  ERC Advanced Grant No.~267165.
  	Work by Micha Sharir has been supported
  	by NSF Grant CCF-08-30272,
  	by Grant 338/09 from the Israel Science Fund,
  	by the Israeli Centers for Research Excellence (I-CORE) program (Center No.~4/11)
  	and by the Hermann Minkowski--MINERVA Center for Geometry at Tel Aviv
  	University.
	Part of the work was done while the first and third authors were 
        visiting ETH Z\"urich.
A preliminary version of the paper appeared in 
\emph{Proceedings of the 53rd Annual IEEE Symposium on Foundations of Computer Science},
2012.}
}

\iffull
\author{ {\sc Pankaj K. Agarwal} \\ {\footnotesize Department of Computer
Science}\\[-1.5mm] {\footnotesize  Duke University}\\[-1.5mm]
{\footnotesize  P.O. Box 90129}\\[-1.5mm] {\footnotesize  Durham, NC
27708-0129, USA} 
\and {\sc Ji\v{r}\'{\i} Matou\v{s}ek} \\ {\footnotesize Department of Applied
Mathematics}\\[-1.5mm]{\footnotesize  Charles University, Malostransk\'{e}
n\'{a}m. 25}\\[-1.5mm] {\footnotesize  118~00~~Praha~1, Czech Republic,
and}\\
{\footnotesize    Institute of  Theoretical Computer
Science}\\[-1.5mm] {\footnotesize    ETH Zurich, 8092 Zurich, Switzerland}
\and {\sc Micha Sharir} \\ {\footnotesize School of Computer Science,
}\\[-1.5mm] {\footnotesize Tel Aviv University, }\\[-1.5mm] {\footnotesize
Tel~Aviv 69978, Israel, and}\\
{\footnotesize Courant Institute of
Mathematical Sciences, }\\[-1.5mm] {\footnotesize New York University,
}\\[-1.5mm] {\footnotesize New York, NY~~10012,~USA} }

\documentclass[11pt]{article}

\setlength{\textwidth}{6.0in} 
\setlength{\topmargin}{0.0in}
\setlength{\headheight}{0in} \setlength{\headsep}{0.0in}
\setlength{\textheight}{8.5in} 
\setlength{\oddsidemargin}{0.25in}
\setlength{\evensidemargin}{0.25in} 

\else
\documentclass[conference,compsocconf]{IEEEtran}
\IEEEoverridecommandlockouts
\author{\IEEEauthorblockN{Pankaj K. Agarwal}
\IEEEauthorblockA{Duke University}
\and
\IEEEauthorblockN{ Ji\v{r}\'{\i} Matou\v{s}ek}
\IEEEauthorblockA{Charles University}
\and
\IEEEauthorblockN{ Micha Sharir}
\IEEEauthorblockA{Tel Aviv University and\\
	Courant Institute, NYU}
}
\makeatletter
\def\@IEEEproof[#1]{\par\noindent\textbf{#1: }}
\makeatother
 \fi

\usepackage[cmex10]{amsmath}
\usepackage{amssymb,amsthm,color,graphicx}
\usepackage{latexsym} 
\usepackage{url}
\usepackage{euscript} 
\usepackage{times} 
\usepackage{cite}

\iffull
\else
\usepackage{fixltx2e}
\usepackage{balance}
\fi

\def\A{\EuScript{A}}
\def\G{\EuScript{G}}
\def\F{\EuScript{F}}
\def\LL{\EuScript{L}}
\def\P{\EuScript{P}}

\def\T{\EuScript{T}} 
\def\E{\EuScript{E}}
\def\PP{\P} 

\def\:{\mid}

\newtheorem{theorem}{Theorem}[section]

\newtheorem{lemma}[theorem]{Lemma} 
\newtheorem{corol}[theorem]{Corollary}

\newcommand{\R}{{\mathbb{R}}} 
\newcommand{\reals}{{\mathbb{R}}} 

\newcommand{\sphere}{{\mathbb{S}}}
\newcommand{\MM}{{\EuScript{M}}} 
\newcommand\eps{\varepsilon}

\newcommand\auxeps{\nu}

\def\Ex{{\mathbb E}}

\def\qtime{\mathsf{Q}}
\def\ptime{\mathsf{T}}
\def\storage{\mathsf{S}}

\newcommand\thePartit{\Omega}
\newcommand\Patches{\Pi}

\def\heading#1{\paragraph{#1}}

\makeatletter \newcommand{\ProofEndBox}{{\ifhmode\unskip\nobreak\hfil\penalty50
\else \leavevmode\fi\quad\vadjust{}\nobreak\hfill$\Box$ \finalhyphendemerits=0
\par}} \makeatother 

\makeatletter
\renewenvironment{proof}[1][\proofname]{\par
  \pushQED{\qed}%
  \normalfont \topsep6\p@\@plus6\p@\relax
  \trivlist
  \item[\hskip\labelsep
        \itshape\bfseries
    #1\@addpunct{.}]\ignorespaces
}{%
  \popQED\endtrivlist\@endpefalse
}
\providecommand{\proofname}{Proof}

\long\def\@makecaption#1#2{
   \vskip 10pt
   \setbox\@tempboxa\hbox{{\footnotesize \textbf{#1.} #2}}
   \ifdim \wd\@tempboxa >\hsize         
       {\footnotesize \textbf{#1.} #2\par}
     \else                              
       \hbox to\hsize{\hfil\box\@tempboxa\hfil}
   \fi}
\makeatother

\def\polylog{\mathop{\mathrm{polylog}}}

\long\def\onefigure#1#2{
\begin{figure*}[tbp]
\begin{center}
#1
\end{center}
\caption{#2}
\end{figure*}
}

\newcommand{\labfig}[2]  
{\onefigure{\mbox{\includegraphics{#1}}}{\label{f:#1} #2} }

\begin{document}

\begin{titlepage} 
\maketitle

\begin{abstract} Let $P$ be a set of $n$ points in $\R^d$. We present a
linear-size data structure for answering range queries on $P$ with
constant-complexity semialgebraic sets as ranges,
in time close to $O(n^{1-1/d})$.
It essentially matches the performance of similar structures for simplex range
searching, and, for $d\ge 5$, significantly improves earlier solutions by the
first two authors obtained in~1994. 
This almost settles a long-standing open problem in range searching.

The data structure is based on the polynomial-partitioning technique of Guth
and Katz [arXiv:1011.4105], which shows that for a parameter $r$, 
$1 < r \le n$, there
exists a $d$-variate polynomial $f$ of degree $O(r^{1/d})$ such that each
connected component of $\R^d\setminus Z(f)$ contains at most $n/r$ points of
$P$, where $Z(f)$ is the zero set of $f$. We present an efficient randomized
algorithm for computing such a polynomial partition, which is of independent
interest and is likely to have additional applications.  
\end{abstract}
\end{titlepage}

\section{Introduction}

\heading{Range searching.} 
Let $P$ be a set of $n$ points in $\R^d$, where
$d$ is a small constant.  Let $\Gamma$ be a family of geometric ``regions,''
called \emph{ranges}, in $\R^d$, each of which can be described algebraically
by some fixed number of real parameters
(a more precise definition is given below).
For example, $\Gamma$ can be the set of all
axis-parallel boxes, balls, simplices, or cylinders, or the set of all
intersections of pairs of ellipsoids.  In the \emph{$\Gamma$-range searching}
problem, we want to preprocess $P$ into a data structure so that
the number of points of $P$ lying in a query range $\gamma \in \Gamma$ can be
counted efficiently. Similar to many previous papers, we actually consider a 
more general setting, the so-called
\emph{semigroup model}, where we are given a weight function on the points 
in $P$ and we ask for the cumulative weight of the points in $P\cap \gamma$. 
The weights are assumed to belong to a semigroup, i.e., subtractions are not 
allowed. We assume that the semigroup operation can be executed 
in constant time.  

In this paper we consider the case in which $\Gamma$ is a set
of constant-complexity semialgebraic sets.
We recall that a \emph{semialgebraic set} is a subset of $\R^d$ obtained
from a finite number of sets of the form $\{x\in\R^d\mid g(x)\ge 0\}$, where $g$
is a $d$-variate polynomial with integer coefficients,\iffull%
\footnote{%
The usual
definition of a semialgebraic set requires these polynomials to have integer
coefficients.  However, for our purposes, since we are going to
assume the real RAM model of
computation, we can actually allow for arbitrary real coefficients without
affecting the asymptotic overhead.} 
\fi
   by Boolean operations (unions, intersections,
and complementations).  Specifically, let $\Gamma_{d,\Delta,s}$ denote
the family of all semialgebraic sets in $\R^d$ defined 
by at most $s$ polynomial inequalities of degree at most $\Delta$ each. 
If $d,\Delta,s$ are all regarded as constants, we refer to
the sets in $\Gamma_{d,\Delta,s}$ as \emph{constant-complexity
semialgebraic sets} (such sets are sometimes also called {\em Tarski cells}).
By \emph{semialgebraic range searching} we mean
$\Gamma_{d,\Delta,s}$-range searching for some parameters $d,\Delta,s$;
in most applications the actual collection $\Gamma$ of ranges
is only a restricted subset of some $\Gamma_{d,\Delta,s}$.
Besides being interesting in its own right, semialgebraic range searching also 
arises in several geometric searching problems, 
such as searching for a point nearest to a query geometric object, counting
the number of input objects intersecting a query object, and many others.

This paper focuses on 
the \emph{low storage} version of range searching with
constant-complexity semialgebraic sets---the data structure is allowed 
to use only linear or near-linear storage, and the goal is to 
make the query time as small as possible.   
At the other end of the spectrum we have the \emph{fast query} version,
where we want queries to be answered 
in polylogarithmic time using as little storage as possible. This variant is discussed
briefly in Section~\ref{sec:concl}.


As is typical in computational
geometry, we will use the {\em real RAM} model of computation, where we can
compute exactly with arbitrary real numbers and each arithmetic operation is
executed in constant time.

\heading{Previous work.}
Motivated by a wide range of applications,
several variants of range searching have been studied in computational 
geometry and database systems at least since the 1980s.
See \cite{AE,Ma:range} for comprehensive 
surveys of this topic. 
The early work focused on the so-called 
\emph{orthogonal range searching}, where ranges are axis-parallel boxes.
After three decades of extensive work on this particular
case, some basic questions still remain open. However, 
geometry plays 
little
role in the known data structures for 
orthogonal range searching.

The most basic and most studied truly geometric instance of range searching is
with \emph{halfspaces}, or more generally \emph{simplices}, as ranges.
Studies in the early 1990s have essentially determined the optimal trade-off
between the worst-case query time and the storage (and preprocessing time)
required by any data structure for 
\iffull
simplex range searching.\footnote{%
This applies when $d$ is assumed to be \emph{fixed} 
and the implicit constants in the
asymptotic notation may depend on $d$. This is the setting in all the previous
papers, including the present one. Of course, in practical applications, this
assumption may be unrealistic unless the dimension is really small. 
However, the known lower bounds imply that if
the dimension is large, no efficient solutions to simplex range
searching exist, at least in the worst-case setting.} 
\else
simplex range searching.
\fi
Lower bounds for this trade-off have been given by Chazelle \cite{c-lbcpr-89} under
the semigroup model of computation, where subtraction of the point weights is
not allowed. 
\iffull  It is possible that, say, the counting version of the 
simplex range searching problem, where we ask
just for the number of points in the query simplex,  might
admit better solutions using subtractions,  but no such solutions are known.
Moreover, there are recent lower-bound results when 
subtractions are also allowed; see \cite{La11} and references therein.
\else
We also refer to \cite{La11} and references therein for
recent lower bounds for the case where subtractions are also allowed.
\fi

The data structures proposed for simplex range searching over the last two 
decades~\cite{Ma:ept,Ma-ehc} match the known lower bounds
within polylogarithmic factors. The state-of-the-art upper bounds are 
by
(i) Chan \cite{chan-opttrees}, who, building on many earlier results, 
provides a linear-size data structure with $O(n\log n)$
expected preprocessing time and $O(n^{1-1/d})$ query time, and
(ii) Matou\v{s}ek~\cite{Ma-ehc}, who provides a data structure with 
$O(n^d)$ storage, $O((\log n)^{d+1})$ query time, and
$O(n^d(\log n)^\eps)$ preprocessing time.\footnote{%
Here and in the sequel,
$\eps$ denotes an arbitrarily small positive constant. The implicit constants
in the asymptotic notation may depend on it, generally tending to infinity as
$\eps$ decreases to $0$.} 
A trade-off between space and query time can be obtained by combining 
these two data structures~\cite{Ma-ehc}.

Yao and Yao~\cite{YY} were perhaps the first to consider range searching 
in which ranges were delimited by graphs of polynomial functions.
Agarwal and Matou\v{s}ek~\cite{AM} have introduced a systematic study of
semialgebraic range searching. Building on the techniques developed
for simplex range searching, they presented a linear-size data structure with
$O(n^{1-1/b+\eps})$ query time, where $b=\max(d,2d-4)$.  For $d\le 4$, this
almost matches the performance for the simplex range searching, but for $d\ge
5$ there is a gap in the exponents of the corresponding bounds. 
Also see~\cite{ShSh} for related recent developments.

The bottleneck in the performance of the just mentioned 
range-searching data structure of \cite{AM} is
a combinatorial geometry problem, known as the \emph{decomposition of
arrangements into constant-complexity cells}. Here, we are given a set $\Sigma$
of $t$ algebraic surfaces in $\R^d$ (i.e., zero sets of $d$-variate polynomials),
with degrees bounded by a constant $\Delta_0$, and we want to decompose
each cell of the arrangement $\A(\Sigma)$ (see Section~\ref{sec:cross} for details)
into subcells that are constant-complexity semialgebraic sets, i.e., belong
to $\Gamma_{d,\Delta,s}$ for some constants $\Delta$ (bound on degrees)
and $s$ (number of defining polynomials), which may depend on $d$ 
and $\Delta_0$, but not on~$t$.  The crucial quantity is the total number 
of the resulting subcells over all cells of $\A(\Sigma)$; namely, if one can 
construct such a decomposition with $O(t^b)$ subcells, with some constant 
$b$, for every $t$ and $\Sigma$,
then the method of \cite{AM} yields query time $O(n^{1-1/b+\eps})$ 
(with linear storage).
The only known general-purpose technique for producing such a decomposition 
is the so-called \emph{vertical decomposition} \cite{CEGS,SA}, which 
decomposes $\A(\Sigma)$ into roughly $t^{2d-4}$ constant-complexity subcells, 
for $d\ge 4$~\cite{Kol,SA}.  
  
An alternative approach,
based on \emph{linearization}, was also proposed in~\cite{AM}. It maps the
semialgebraic ranges in $\R^d$ to simplices in some higher-dimensional space
and uses simplex range searching there. However, its performance depends on the
specific form of the polynomials defining the ranges. In some special cases
(e.g., when ranges are balls in $\reals^d$), linearization yields better query time 
than the decomposition-based technique mentioned above,
but for general constant-complexity semialgebraic ranges, 
linearization has worse performance.

\heading{Our results. } In a recent breakthrough, Guth and Katz~\cite{GK2}
have presented a new space decomposition technique, called polynomial
partitioning.  For a set $P\subset \R^d$  of $n$ points and a real parameter
$r$, $1<r\le n$, an \emph{$r$-partitioning polynomial} for $P$ is a nonzero
$d$-variate polynomial $f$ such that each connected component of $\R^d\setminus
Z(f)$ contains at most $n/r$ points of $P$, where $Z(f):=\{x\in\R^d \mid
f(x)=0\}$ denotes the zero set of~$f$. The decomposition of $\R^d$ into $Z(f)$
and the connected components of $\R^d\setminus Z(f)$ is called a
\emph{polynomial partition} (induced by $f$).  Guth and Katz show that an
$r$-partitioning polynomial of degree $O(r^{1/d})$ always exists, but 
their argument does not lead to an efficient algorithm for constructing 
such a polynomial, mainly because it relies on ham-sandwich cuts in 
high-dimensional spaces, for which no efficient construction is known.
Our first result is an efficient randomized algorithm for computing an
$r$-partitioning polynomial.

\begin{theorem}
\label{thm:partition-algo} 
Given a set $P$ of $n$ points in
$\R^d$, for some fixed $d$, and a parameter $r\le n$, an $r$-partitioning
polynomial for $P$ of degree $O(r^{1/d})$ can be computed in randomized
expected time $O(nr+r^3)$.  \end{theorem}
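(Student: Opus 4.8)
The plan is to follow the Guth--Katz existence proof but to replace the ham-sandwich machinery, which is what obstructs efficiency, by a sampling-based construction that builds the partitioning polynomial incrementally, doubling the number of "balanced" parts at each stage. Concretely, I would fix a constant $c$ and aim to produce, after $O(\log r)$ rounds, a polynomial $f$ that is a product of $O(r^{1/d})$ lower-degree factors, one factor added per round. The invariant maintained after round $j$ is that $\R^d \setminus Z(f)$ has its $n$ input points distributed so that no connected component holds more than $n/2^{j}$ of them; since each round multiplies the degree by at most a constant (roughly $2^{1/d}$ on average, so that the product telescopes to $O(r^{1/d})$), after $\log_2 r$ rounds we obtain an $r$-partitioning polynomial of the claimed degree.

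The engine of a single round is the following: given a collection of "cells" (connected components of the current complement), each containing some subset of $P$, I would find, for each overpopulated cell simultaneously, a single polynomial $g$ of controlled degree that bisects the point set inside every such cell. The existence of one polynomial bisecting many prescribed point sets at once is exactly a Veronese-lifting / linear-algebra argument: lifting $\R^d$ by all monomials of degree $\le \delta$ into $\R^{N}$ with $N = \binom{d+\delta}{d}$, a bisecting hyperplane in the lifted space exists whenever $N$ exceeds the number of sets to be bisected, and such a hyperplane can be found by a deterministic (or randomized) ham-sandwich-in-low-effective-dimension computation; but rather than computing an exact ham-sandwich cut in dimension $N$, I would approximate each point set by a small random sample and bisect the samples, which with high probability yields an \emph{approximate} bisection of every true set — good enough to make geometric progress, at the cost of only a constant-factor slack absorbed into the $O(\log r)$ round count. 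Bookkeeping the cells and reassigning points to the new cells costs $O(n)$ per round plus a term polynomial in the total degree for the arrangement/point-location work, which is where the additive $r^3$ enters: maintaining and querying the sign vector of a degree-$O(r^{1/d})$ polynomial arrangement in $\R^d$, or solving the associated linear systems of size $O(r)$, costs $O(r^3)$ by naive linear algebra, and this is done once rather than per round.

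Summing the per-round costs gives $O(n)$ per round times $O(\log r)$ rounds for the point redistribution, which is $O(n\log r)$, plus the one-time $O(r^3)$ for the algebraic overhead; to reach the stated $O(nr + r^3)$ bound (without the logarithmic factor, and allowing the construction to be done in one shot for the larger degree) I would, in the regime $r$ large relative to a fixed threshold, instead construct the polynomial directly by sampling $O(r)$ points of $P$, computing a partitioning polynomial for the sample by the above method, and arguing via an $\eps$-approximation / relative-approximation bound that it is also an $O(r)$-partitioning polynomial for $P$ with the right degree after a constant-factor adjustment of $r$; evaluating the resulting polynomial on all $n$ points to verify and to build the output then costs $O(n \cdot r^{1/d}) = O(nr)$ by brute force (the dominant $nr$ term actually comes from the recursive sub-partitioning inside $Z(f)$, handled exactly as in Guth--Katz by induction on dimension, each level contributing $O(nr)$).

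The main obstacle, and the place where care is needed, is precisely the transfer from the sample to the full point set: a polynomial that bisects a random sample need not bisect $P$, and naively one loses control over how many points land in $Z(f)$ itself (points exactly on the zero set are "free" in the Guth--Katz count but must not accumulate). I would handle this with a relative $(\eps,\delta)$-approximation argument — sampling $\Theta(r\log r)$ points suffices so that every cell in the arrangement of candidate partitioning polynomials of bounded degree has its $P$-count estimated to within a relative error, which upgrades the sample-partition to a genuine $P$-partition after inflating $r$ by a constant — and, for the zero-set issue, by perturbing $f$ slightly (replacing $f$ by $f \pm \zeta$ for a generic small $\zeta$, or by the standard trick of working with $f^2 + $ a small positive function) so that $Z(f)$ contains no points of $P$ while the component structure, and hence the partition guarantee, is preserved. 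The remaining verification — that the degree stays $O(r^{1/d})$ through the recursion on $Z(f)$, which is a variety of dimension $d-1$ and degree $O(r^{1/d})$, so that the number of recursive levels is the constant $d$ and the degrees telescope correctly — is exactly the inductive structure of the Guth--Katz theorem and carries over verbatim.
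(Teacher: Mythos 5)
Your high-level strategy---relax exact bisection, randomize, and build $f$ as a product over $O(\log r)$ rounds whose degrees form a geometric series summing to $O(r^{1/d})$---matches the paper's. But the engine of a single round is exactly the part you leave unspecified, and it is where the entire algorithmic content lies. You say you would ``approximate each point set by a small random sample and bisect the samples'' in the Veronese-lifted space $\R^{N}$. If the samples have more than one point each, bisecting them simultaneously is still a ham-sandwich computation in dimension $N\approx k$, which is precisely the intractable step you set out to avoid. The paper's resolution is to take a sample of size \emph{one} from each set: pick one random point $a_i\in A_i$, and let $h$ be the hyperplane through the $k$ lifted points $\Phi(a_1),\dots,\Phi(a_k)$, computable by solving one $k\times k$ linear system in $O(k^3)$ time (this is the source of the $r^3$ term). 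The nontrivial claim is then that $f=h\circ\Phi$ is, with probability at least $3/4$, ``well-dissecting'' for each fixed $A_i$ (at most $\tfrac78|A_i|$ points strictly on either side); this is proved by conditioning on the other $a_j$'s, projecting onto a $2$-plane orthogonal to the $(k-2)$-flat they span, and a quarters-of-a-circle counting argument. Your proposal contains no substitute for this lemma, and without it there is no guarantee that a round makes progress. An $\eps$-approximation transfer would not rescue this: the relevant ranges (connected components of $\R^d\setminus Z(g)$ for $\deg g=O(r^{1/d})$) have description complexity growing with $r$, so the sample size needed for a uniform guarantee is not $O(r\log r)$ with constants independent of $r$.

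Two further steps in your writeup would actively fail. First, you propose perturbing $f$ (e.g., to $f^2+\zeta$) so that $Z(f)$ contains no points of $P$. This is unnecessary---the definition of an $r$-partitioning polynomial only constrains the connected components of $\R^d\setminus Z(f)$, so points of $P$ lying on $Z(f)$ are harmless here, and the paper simply lets them drop out of subsequent dissections---and it is harmful: $Z(f^2+\zeta)$ is empty for $\zeta>0$, so $\R^d\setminus Z(f^2+\zeta)$ has a single component containing all $n$ points, and more generally perturbation merges cells and destroys the $n/r$ bound. Second, there is no ``recursive sub-partitioning inside $Z(f)$ by induction on dimension'' in this theorem; you have imported the structure of the range-searching data structure (where the exceptional set $P\cap Z(f)$ is indeed handled by a dimension recursion) into the construction of the partitioning polynomial, where it does not belong. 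The $O(nr)$ term arises simply from evaluating, in each of the $O(\log r)$ phases, the current dissecting polynomials (with $O(\kappa_j)$ monomials, $\kappa_j\le(8/7)^j$) at up to $n$ points, summed geometrically over the phases.
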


Next, we use this algorithm to bypass the arrangement-decomposition 
problem mentioned above. Namely, based on 
polynomial partitions, we construct \emph{partition trees}~\cite{AE,Ma:range}
that answer range 
queries with constant-complexity semialgebraic sets in near-optimal time,
using linear storage. An essential ingredient in the performance
analysis of these partition trees is a recent combinatorial
result of Barone and Basu \cite{BB}, originally conjectured
by the second author, which deals with the complexity of certain kinds of 
arrangements of zero sets of polynomials (see Theorem~\ref{t:basu}).
While there have already been several combinatorial applications of
the Guth-Katz technique (the most impressive being the original one in
\cite{GK2}, which solves the famous Erd{\H o}s's distinct distances problem,
and some of the others presented in~\cite{KMS,KMSS,SoTa,Za}), 
ours seems to be the first \emph{algorithmic} application.

We establish two range-searching results, both based on polynomial partitions.
%
For the first result, we need to introduce the notion of \emph{$D$-general
position}, for an integer $D\ge 1$. We say that a set $P\subset \R^d$ is in
$D$-general position if no $k$ points of $P$ are contained in the zero set of a
nonzero $d$-variate polynomial of degree at most $D$, where 
$k:={D+d\choose d}$. This is the number one expects for a ``generic'' point 
set.\footnote{Indeed, $d$-variate polynomials of degree at most $D$ have at most
$k-1$ distinct nonconstant monomials.
The Veronese map (e.g., see~\cite{GK2}) maps $\R^d$ to $\R^{k-1}$, and 
hyperplanes in $\R^{k-1}$ correspond bijectively to $k$-variate polynomials 
of degree at most $D$. It follows that any set of $k-1$ points in $\R^d$ is
contained in the zero set of a $d$-variate polynomial of degree at most $D$,
corresponding to the hyperplane in $\R^{k-1}$ passing through the Veronese 
images of these points. Similarly, $k$ points in general position are not 
expected to have this property, because one does not expect their images
to lie in a common hyperplane. 
 See \cite{EKS,GK1} for more details.}

\begin{theorem}\label{t:const-r} Let $d,\Delta,s$ and $\eps>0$ be constants.
Let
$P\subset \R^d$ be an $n$-point set in $D_0$-general position, where $D_0$ is a
suitable constant depending on $d,\Delta$, and $\eps$.  Then the
$\Gamma_{d,\Delta,s}$-range searching problem for 
$P$ can be solved with $O(n)$ storage,
$O(n\log n)$ expected preprocessing time, and $O(n^{1-1/d+\eps})$ query 
time.
\end{theorem}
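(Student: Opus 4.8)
The plan is to build a \emph{partition tree} on $P$ whose branching is controlled by a polynomial partition with a \emph{constant} parameter $r=r(d,\Delta,s,\eps)$, to be fixed at the very end.

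\textbf{The structure.} At a node $v$ holding a subset $P_v$ of $n_v$ points, if $n_v$ is below a fixed constant we make $v$ a leaf and store $P_v$ explicitly. Otherwise, by Theorem~\ref{thm:partition-algo} we compute, in expected time $O(n_vr+r^3)=O(n_v)$, an $r$-partitioning polynomial $f_v$ for $P_v$ of degree $D\le c_0r^{1/d}$ with $c_0=c_0(d)$. By the classical Milnor--Thom bound, $\R^d\setminus Z(f_v)$ has $O(D^d)=O(r)$ connected components (\emph{cells}), each containing at most $n_v/r$ points of $P_v$. Since $P$ is in $D_0$-general position with $D_0\ge D$, the zero set $Z(f_v)$ contains at most ${D+d\choose d}-1=O(1)$ points of $P_v$; we store this constant-size subset $P_v^0$ at $v$. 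For every nonempty cell $C$ we create a child holding $P_v\cap C$, and alongside the pointer to it we store the semigroup weight $w(P_v\cap C)$. Locating the points of $P_v$ inside the cells takes $O(n_v)$ time (build a constant-complexity cell decomposition of $Z(f_v)$, precompute which cells of the complement its full-dimensional cells assemble into, and drop each point through it). Because each child has fewer than $n_v$ points and the children's point sets partition $P_v\setminus P_v^0$, the tree has depth $O(\log n)$ and $O(n)$ nodes of $O(1)$ size each, so it uses $O(n)$ storage and is built in expected $O(n\log n)$ time (the work per level sums to $O(n)$).

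\textbf{Answering a query.} Given $\gamma\in\Gamma_{d,\Delta,s}$, note that $\partial\gamma\subseteq\bigcup_{i=1}^{s}Z(g_i)$, where $g_1,\dots,g_s$ are the degree-$\le\Delta$ polynomials defining $\gamma$. Starting at the root, at a node $v$ we first test the $O(1)$ points of $P_v^0$ against $\gamma$ directly. Then, for each cell $C$, a constant-time semialgebraic test decides whether $C\cap\gamma=\emptyset$ (contributes nothing), $C\subseteq\gamma$ (add the stored weight $w(P_v\cap C)$), or $C$ is crossed by $\partial\gamma$ (recurse into the corresponding child). No subtractions are used, and correctness is immediate. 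The query cost obeys $Q(n_v)\le O(r)+X_v\cdot Q(n_v/r)$, where $X_v$ is the number of crossed cells.

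\textbf{Bounding the branching (the main obstacle) and conclusion.} The crux is to show $X_v\le C_1\,r^{1-1/d}$ with $C_1=C_1(d,\Delta,s)$ \emph{independent of $r$}. A crossed cell is a connected component of $\R^d\setminus Z(f_v)$ that meets some $Z(g_i)$, so $X_v$ is at most $s$ times the maximum, over a single degree-$\le\Delta$ hypersurface, of the number of cells of $\R^d\setminus Z(f_v)$ it meets. Bounding this last quantity by $O_{d,\Delta}(D^{d-1})$ is exactly the content of the Barone--Basu estimate, Theorem~\ref{t:basu}; I expect this to be the genuinely hard input. Since $D\le c_0r^{1/d}$, it gives $X_v\le C_1\,r^{1-1/d}$. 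Unfolding the recurrence over the $\le\log_r n$ levels, and noting that the additive $O(r)$ terms are dominated by the last level, yields
$Q(n)=O\bigl((C_1 r^{1-1/d})^{\log_r n}\bigr)=O\bigl(n^{\,1-1/d+(\log C_1)/\log r}\bigr).$
Finally, we fix $r$ to be a constant large enough that $(\log C_1)/\log r<\eps$, and take $D_0\ge c_0r^{1/d}$; this makes all degrees, cell counts, and per-node tests constants depending only on $d,\Delta,s,\eps$, and gives storage $O(n)$, expected preprocessing $O(n\log n)$, and query time $O(n^{1-1/d+\eps})$, as claimed.
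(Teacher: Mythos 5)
Your proposal is correct and follows essentially the same route as the paper: a constant fan-out partition tree built from the algorithmic polynomial partition of Theorem~\ref{thm:partition-algo}, with the exceptional set $P\cap Z(f)$ reduced to $O(1)$ points by the $D_0$-general position assumption, and the crossing number bounded by $O(s\Delta D^{d-1})=O(r^{1-1/d})$ via the Barone--Basu bound (Theorem~\ref{t:basu}), exactly as in the paper's Lemma~\ref{l:cr}. The only nitpick is that $D_0$-general position bounds $|P\cap Z(f_v)|$ by ${D_0+d\choose d}-1$ rather than ${D+d\choose d}-1$, but both are constants, so nothing changes.
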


\iffull
We note that both here and in 
the next theorem, while the preprocessing algorithm
is randomized, the queries are answered deterministically,
and the query time bound is worst-case.
\fi

Of course, we would like to handle arbitrary point sets, not only those in
$D_0$-general position. This can be achieved by an infinitesimal perturbation
of the points of $P$. 
A general technique known as ``simulation of simplicity'' 
(in the version considered by Yap \cite{y-gctsp-90})
ensures that the perturbed set $P'$ is in $D_0$-general position.
If a point $p\in P$ lies
in the interior of a query range $\gamma$, then so does the corresponding
perturbed point $p'\in P'$, and similarly for  $p$ in the interior of
$\R^d\setminus\gamma$.  However, for $p$ on
the boundary of $\gamma$, we cannot be sure if $p'$ ends up inside 
or outside~$\gamma$. 
 
Let us say that a \emph{boundary-fuzzy} solution to the 
$\Gamma_{d,\Delta,s}$-range searching problem is a data structure that, 
given a query $\gamma\in \Gamma_{d,\Delta,s}$,
returns an answer in which all points of $P$ in the interior of $\gamma$ are
counted and none in the interior of $\R^d\setminus\gamma$ is counted, while
each point $p\in P$ on the boundary of $\gamma$ may or may not be counted.  In
some applications, we can think of the points of $P$ being imprecise anyway
(e.g., their coordinates come from some imprecise measurement), and then
boundary-fuzzy range searching may be adequate.

\begin{corol}\label{c:fuzzy} Let $d,\Delta,s$, and $\eps>0$ be constants.
Then for every $n$-point set in $\R^d$, there
is a boundary-fuzzy $\Gamma_{d,\Delta,s}$-range searching data structure
with $O(n)$ storage, $O(n\log n)$ expected preprocessing time, and
$O(n^{1-1/d+\eps})$ query time.  \end{corol}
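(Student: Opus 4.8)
The plan is to derive Corollary~\ref{c:fuzzy} directly from Theorem~\ref{t:const-r} via an infinitesimal symbolic perturbation of the input points, so that no geometry beyond Theorem~\ref{t:const-r} is needed. Concretely, I would fix the constant $D_0=D_0(d,\Delta,\eps)$ supplied by Theorem~\ref{t:const-r}, and replace each input point $p\in P$ by a perturbed point $p'=p'(\delta)$ whose coordinates are polynomials in a single formal infinitesimal $\delta>0$, chosen according to the ``simulation of simplicity'' scheme of Yap~\cite{y-gctsp-90}. The key property to invoke is that, for all sufficiently small $\delta$, the perturbed set $P'=P'(\delta)$ is in $D_0$-general position: any $k={D_0+d\choose d}$ of the perturbed points fail to lie on a common nonzero $d$-variate polynomial of degree $\le D_0$, because the relevant $k\times k$ determinant (in the Veronese coordinates) is a nonzero polynomial in $\delta$ and hence nonzero for small $\delta$. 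On $P'$ we build the data structure of Theorem~\ref{t:const-r}; since $P'$ has the same cardinality $n$, this costs $O(n)$ storage and $O(n\log n)$ expected preprocessing time, and answers $\Gamma_{d,\Delta,s}$-queries in $O(n^{1-1/d+\eps})$ time.

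Next I would argue that this structure, interpreted back on $P$, is a valid boundary-fuzzy solution. Fix a query range $\gamma\in\Gamma_{d,\Delta,s}$, defined by at most $s$ polynomial inequalities $g_j(x)\ge 0$ of degree $\le\Delta$. For a point $p$ in the interior of $\gamma$, every defining inequality holds strictly, i.e.\ $g_j(p)>0$ for the active ones (and the appropriate strict inequalities for the others), so by continuity $g_j(p')>0$ for $\delta$ small enough; hence $p'\in\gamma$ and $p$ is counted. Symmetrically, if $p$ lies in the interior of $\R^d\setminus\gamma$, then $p'\notin\gamma$ for small $\delta$ and $p$ is not counted. Only when $p$ lies on $\bd\gamma$ — so some $g_j(p)=0$ — can the sign of $g_j(p')$ go either way, and then $p'$ may or may not be in $\gamma$; this is exactly the slack that the definition of boundary-fuzzy permits. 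Thus the reported cumulative weight includes all interior points of $\gamma$, excludes all interior points of the complement, and is otherwise unconstrained on $\bd\gamma$, which is precisely what a boundary-fuzzy data structure must do.

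The one genuine subtlety, which I would treat carefully, is that the perturbation parameter $\delta$ must be chosen \emph{uniformly} over all possible queries, not per query. This is handled by the symbolic nature of the construction: we never pick a numerical $\delta$ at all. Instead we run the preprocessing and query algorithms of Theorem~\ref{t:const-r} over the ordered field $\R(\delta)$ of rational functions in $\delta$ ordered so that $\delta$ is positive and smaller than every positive real — equivalently, we compare quantities by their leading terms in $\delta$, i.e.\ lexicographically on the tuples of $\delta$-coefficients. Every primitive test performed by the data structure (sign of a polynomial in the coordinates of the $p'$ at a point, comparison of two such values, etc.) becomes a comparison of two elements of $\R(\delta)$, which is decidable in constant time on the real RAM given the constant-degree polynomial encodings, and it returns the ``limit as $\delta\to 0^+$'' answer. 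In particular a query point $p'$ is declared to lie in $\gamma$ iff the sign tuple of its $\delta$-expansions satisfies the Boolean formula of $\gamma$; this is consistent across all queries because it is a property of the formal object $p'(\delta)$, not of any chosen value of $\delta$. With this understood, the running-time and storage bounds are inherited verbatim from Theorem~\ref{t:const-r} (the word-size / arithmetic-cost blowup from working in $\R(\delta)$ is only a constant factor, since $D_0$, $d$, $\Delta$, $s$ are all constants), which completes the proof.

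I expect the main obstacle to be precisely the bookkeeping of this symbolic-perturbation layer: verifying that the ``simulation of simplicity'' scheme can be instantiated so that (a) the perturbed points are in $D_0$-general position for the specific constant $D_0$ of Theorem~\ref{t:const-r}, and (b) all comparisons made by the preprocessing and query algorithms of that theorem are of a form to which the scheme applies and can be resolved in $O(1)$ time on the real RAM. Both (a) and (b) are standard for this technique, and for (a) the determinant argument sketched above, combined with the genericity of Yap's perturbation, suffices; but making the reduction airtight requires being explicit that the only ``new'' primitive relative to simplex range searching — testing membership of a point in a constant-complexity semialgebraic set — is again a finite Boolean combination of sign tests on constant-degree polynomials, and hence poses no new difficulty.
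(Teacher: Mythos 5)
Your proposal is correct and follows essentially the same route as the paper: apply Yap's symbolic perturbation scheme to the data structure of Theorem~\ref{t:const-r}, observe that the perturbed set is in $D_0$-general position because the relevant Veronese determinant is a not-identically-zero polynomial and hence nonzero under the perturbation, and deduce boundary-fuzziness from the fact that the symbolic answers agree with those for an actual infinitesimally perturbed point set. The only presentational difference is that the paper phrases the general-position guarantee via the consistency of Yap's sign oracle (the algorithm can never \emph{detect} too many points on $Z(f)$), and it spells out the point you flag as the main obstacle, namely that every branching test of the algorithm is the sign of a constant-degree polynomial in the inputs because all computations have constant-bounded arithmetic depth.
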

%
%
%

\iffull
Actually, previous results on range searching that use simulation of simplicity to avoid degenerate
cases also solve only the boundary-fuzzy variant (see e.g. \cite{Ma:ept,Ma-ehc}). 
However, the previous
techniques, even if presented only for point sets in general position, can
usually be adapted to handle degenerate cases as well, perhaps with some
effort, which is nevertheless routine. For our technique,
degeneracy appears to be a more substantial problem
because it is possible that a large subset of $P$
(maybe even all of $P$) is contained in the zero set of the partitioning
polynomial $f$, and the recursive divide-and-conquer mechanism yielded
by the partition of $f$ does not apply to this subset.

Partially in response to this issue, we 
\else
We next
\fi
present a different data structure that,
at a somewhat higher preprocessing cost, not only gets rid of the 
boundary-fuzziness condition  but also has a slightly improved 
query time (in terms of $n$). The main idea is that we build an 
auxiliary recursive data structure to handle 
the potentially large subset of points that lie in the 
zero set of the partitioning polynomial.

\begin{theorem}\label{t:large-r} 
Let $d,\Delta,s$, and $\eps>0$ be constants.
Then the $\Gamma_{d,\Delta,s}$-range searching problem for an arbitrary
$n$-point set in $\R^d$
can be solved with $O(n)$ storage, $O(n^{1+\eps})$ expected
preprocessing time, and $O(n^{1-1/d}\log^B n)$ query time, where 
$B$ is a constant depending on $d,\Delta,s$ and~$\eps$.  
\end{theorem}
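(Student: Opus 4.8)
The plan is to build a partition tree using the polynomial-partitioning algorithm of Theorem~\ref{thm:partition-algo}, but with an additional recursive structure that handles the points lying on the zero set $Z(f)$ of the partitioning polynomial. Fix a large constant $r$ (depending on $d$, $\Delta$, $s$, $\eps$). At the root, compute an $r$-partitioning polynomial $f$ of degree $D=O(r^{1/d})$ in expected time $O(nr+r^3)=O(n)$ for $r=O(1)$. Let $P_0 = P\cap Z(f)$ and let $P_1,\dots,P_m$ be the subsets of $P$ in the connected components (cells) of $\R^d\setminus Z(f)$; by the partitioning property each $|P_i|\le n/r$ for $i\ge 1$, while $P_0$ may be large (possibly all of $P$). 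We recurse on each $P_i$, $i\ge1$, in the same way, and we build a separate, auxiliary data structure $\mathcal{D}_0$ for $P_0$, which is a point set lying on a variety of dimension $\le d-1$ and degree $\le D$. For the auxiliary structure, the key point is that within $Z(f)$ we can again apply polynomial partitioning intrinsically (or, more simply, stratify $Z(f)$ into $O(1)$ smooth pieces, each of dimension $\le d-1$, and recurse on a lower-dimensional problem), ultimately bottoming out at a one-dimensional or zero-dimensional problem that is solved directly.

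The query procedure is the standard partition-tree descent. Given $\gamma\in\Gamma_{d,\Delta,s}$, for each cell $C_i$ of the partition: if $C_i$ is entirely inside $\gamma$ we add the stored weight of $P_i$; if $C_i$ is disjoint from $\gamma$ we discard it; otherwise $\bd\gamma$ crosses $C_i$ and we recurse into the child for $P_i$. In addition we always query $\mathcal{D}_0$ for $\gamma\cap P_0$. The crucial quantity is the number of cells $C_i$ crossed by $\bd\gamma$. Here we invoke the Barone--Basu result (Theorem~\ref{t:basu}): since $\bd\gamma$ is defined by $O(1)$ polynomials of degree $O(1)$ and the cells of the partition are the connected components of $\R^d\setminus Z(f)$ with $\deg f=D$, the number of cells of the partition that $\bd\gamma$ meets is $O(D^{d-1})=O(r^{(d-1)/d})$. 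Thus the query recurrence is roughly
\[
Q(n) \le c\, r^{(d-1)/d}\, Q(n/r) + Q_0(n) + O(r^{O(1)}),
\]
where $Q_0$ is the cost of the auxiliary structure. Choosing $r$ a sufficiently large constant, the first term alone solves to $O(n^{1-1/d+\eps'})$ for any $\eps'$; to shave the $n^{\eps'}$ down to a polylogarithmic factor we instead stop the partition-tree recursion when the subproblem size drops to $n^{\delta}$ for a small constant $\delta$, and handle the base case by a secondary (fast-query or brute-force-balanced) structure, a standard device that converts the $n^{\eps}$ loss into a $\log^{B}n$ factor; the auxiliary term $Q_0(n)$ is controlled by induction on the dimension $d$, contributing only a lower-order ($O(n^{1-1/(d-1)}\mathrm{polylog})$, hence subsumed) cost, plus an extra polylog per level of the dimension recursion, which is absorbed into~$B$. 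Storage is $O(n)$ because each point of $P$ is stored in exactly one leaf of the partition tree (points on $Z(f)$ are passed only to $\mathcal{D}_0$, recursively), and preprocessing is $O(n^{1+\eps})$ since each level of the tree costs linear time and the auxiliary lower-dimensional structures contribute a subpolynomial overhead; the $n^{\eps}$ in preprocessing (as opposed to query time) comes from the cost of constructing the stratification of $Z(f)$ and the base-case structures.

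The main obstacle is the treatment of the points on $Z(f)$: unlike in the general-position setting of Theorem~\ref{t:const-r}, $P_0$ can contain essentially all of $P$, so the divide-and-conquer does not reduce the problem size on this branch, and a naive recursion does not terminate. The resolution is to recurse on the \emph{dimension}: $Z(f)$ is a $(d-1)$-dimensional variety of bounded degree, so after stratifying it into $O(1)$ relatively open smooth semialgebraic pieces, range searching on $P_0$ against $\gamma$ becomes a range-searching problem on a lower-dimensional algebraic set, which we solve by the same machinery with $d$ replaced by $d-1$ (the query cost there, $O(n^{1-1/(d-1)}\mathrm{polylog})$, is asymptotically smaller than the target bound). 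Making this precise requires care in (i) defining polynomial partitions intrinsically on a variety, or equivalently re-parametrising each smooth stratum, so that Theorem~\ref{thm:partition-algo} and the Barone--Basu bound still apply with the ambient dimension reduced by one, and (ii) verifying that the polylogarithmic factors accumulated across the at most $d$ levels of dimension recursion and across the $O(\log n)$ levels of each partition tree multiply out to only $\log^{B}n$ for a constant $B=B(d,\Delta,s,\eps)$, rather than something super-polylogarithmic. The remaining steps --- the cell-crossing analysis, the recurrence solution, and the storage/preprocessing accounting --- are routine once this dimensional recursion is set up correctly.
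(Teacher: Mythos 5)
Your high-level plan correctly identifies the main obstacle (the exceptional set $P^*=P\cap Z(f)$ can contain almost all of $P$) and the right cure (recurse on the dimension), but two load-bearing steps are missing or misdescribed. First, the reduction to dimension $d-1$: the paper does not partition ``intrinsically on the variety'' nor reparametrize smooth strata --- neither of these is developed in your sketch, and the claimed equivalence between them is not obvious (intrinsic polynomial partitioning on varieties is genuinely harder and was not available). What the paper actually does is choose a direction that is good for $f$, build a first-stage cylindrical algebraic decomposition adapted to $\{f\}$ (Lemma~\ref{l:CAD}), decompose $Z(f)$ into $x_d$-monotone patches, project the points of each patch onto the hyperplane $x_d=0$, and show (Lemma~\ref{l:project}) that the projection of $\gamma\cap\pi$ is again a semialgebraic range in $\R^{d-1}$ --- but of degree $\Delta_1=D^{O(d^3)}$ with $s_1=(Ds)^{O(d^4)}$ defining polynomials. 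Without some such concrete mechanism your dimension recursion is not set up.

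Second, and more seriously, your two fixes are in tension. The paper obtains the $\log^B n$ query bound by taking the fan-out $r=n^\delta$ (a tiny power of $n$), not by a constant $r$ followed by a bottoming-out trick; in particular the ``fast-query secondary structure'' you allude to does not exist for general semialgebraic ranges (this is an open problem, see Section~\ref{sec:concl}). But once $r=n^\delta$, the degree $D=O(r^{1/d})$ is a power of $n$, so the projected ranges handled by the auxiliary $(d-1)$-dimensional structures have degree and description complexity that are powers of $n$ --- they are not constant-complexity ranges, and a naive induction on dimension with ``the same machinery'' gives constants that blow up with $n$. This is exactly why the paper first proves a weaker bound $\qtime_d(n,\Delta,s)\le(\Delta s)^{A_{d,\auxeps}}n^{1-1/d+\auxeps}$ with explicit polynomial dependence on $\Delta$ and $s$, valid for non-constant $\Delta,s$ (Lemma~\ref{l:weakerQ}), and only then bootstraps to $n^{1-1/d}\log^B n$ at the top level, where $\Delta,s$ are constants and the second term of the recurrence is absorbed using the weak bound. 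Your sketch treats the lower-dimensional subproblems as having constant parameters while simultaneously invoking a device that forces them not to; closing this gap is the substance of the paper's two-stage analysis.
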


We remark that the dependence of $B$ on $\Delta$, $s$, and $\eps$ is
reasonable, but its dependence on $d$ is superexponential.

Our algorithms work for the semigroup model described earlier.
Assuming that a semigroup operation can be executed in constant time, the query time remains 
the same as for the counting query.
A reporting query---report the points of $P$ lying in a query
range---also fits in the semigroup model, 
except one cannot assume that a semigroup operation in this case
takes constant time. The time taken by a reporting query is proportional to the 
cost of a counting query plus the number of reported points.

\iffull
\heading{Roadmap of the paper.} Our algorithm is based on the
polynomial partitioning technique by Guth and Katz, 
and we begin by briefly reviewing it 
in Section~\ref{sec:GK}. Next, in
Section~\ref{sec:algo}, we describe
the randomized algorithm for constructing such a
partitioning polynomial. Section~\ref{sec:cross} presents an algorithm 
for computing the cells of a polynomial partition that are crossed by 
a semialgebraic range, and discusses several related topics.  
Section~\ref{sec:range1} presents our first data structure, which is as in
Theorem~\ref{t:const-r}. Section~\ref{s:CAD} describes the method
for handling points lying on the zero set of the partitioning polynomial,
and Section~\ref{sec:range2} presents our second data 
structure. We conclude in 
Section~\ref{sec:concl} by mentioning a few open problems.
\fi

\section{Polynomial Partitions} \label{sec:GK}

In this section we briefly review the Guth-Katz technique for later use.
We begin by stating their result.

\begin{theorem}[Guth-Katz~\cite{GK2}]\label{t:gk} Given a set $P$ of $n$
points in $\R^d$ and a parameter $r\le n$, there exists an $r$-partitioning
polynomial for $P$ of degree at most $O(r^{1/d})$ (for $d$ fixed).
\end{theorem}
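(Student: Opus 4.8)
The plan is to reproduce the argument of Guth and Katz, which constructs the partitioning polynomial by iterating \emph{polynomial ham-sandwich cuts}. Recall the classical ham-sandwich theorem: any $m$ finite point sets in $\R^m$ can be simultaneously bisected by a single hyperplane, in the sense that each of the two open halfspaces it bounds contains at most half the points of every set. To upgrade this to polynomial cuts one composes with the \emph{Veronese map} $v_D\colon\R^d\to\R^{N}$, $N:=\binom{D+d}{d}-1$, whose coordinate functions are all the nonconstant monomials of degree at most $D$. An affine hyperplane $\langle a,y\rangle=b$ in $\R^N$ with $a\neq 0$ pulls back under $v_D$ to the zero set of the polynomial $g(x)=\langle a,v_D(x)\rangle-b$, which is a nonzero $d$-variate polynomial of degree at most $D$, and a point of $\R^d$ lies on the positive (resp.\ negative) side of the hyperplane exactly when $g>0$ (resp.\ $g<0$) there. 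Hence, for any family of at most $N=\Theta(D^d)$ finite point sets in $\R^d$ there is a nonzero polynomial $g$ with $\deg g\le D$ whose zero set simultaneously bisects all of them.

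With this tool in hand I would proceed by repeated bisection. Put $k:=\lceil\log_2 r\rceil$ (we may assume $r\ge 2$, the case $r<2$ being trivial), so that $2^k\ge r$ and thus $n/2^k\le n/r$. Maintain, for $j=1,\dots,k$, a family $\mathcal{P}_j$ of at most $2^{j-1}$ pairwise disjoint subsets of $P$, each of size at most $n/2^{j-1}$, starting from $\mathcal{P}_1=\{P\}$. Let $D_j$ be the least integer with $\binom{D_j+d}{d}-1\ge 2^{j-1}$; since $\binom{D+d}{d}=\Theta(D^d)$ for fixed $d$, this gives $D_j=O(2^{(j-1)/d})$. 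Apply the polynomial ham-sandwich statement to the sets of $\mathcal{P}_j$ to obtain a nonzero polynomial $f_j$ of degree at most $D_j$ whose zero set bisects each of them, and form $\mathcal{P}_{j+1}$ by replacing each $Q\in\mathcal{P}_j$ with the two sets $Q\cap\{f_j>0\}$ and $Q\cap\{f_j<0\}$, discarding the points of $Q$ that lie on $Z(f_j)$. Then $|\mathcal{P}_{j+1}|\le 2^j$ and each of its sets has size at most $n/2^j$. Finally set $f:=f_1 f_2\cdots f_k$.

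Two things then remain to be checked. For the degree, $\deg f=\sum_{j=1}^{k}D_j=\sum_{j=1}^{k}O(2^{(j-1)/d})$; since $2^{1/d}>1$ is a constant, this is a geometric-type sum dominated by its last term, so $\deg f=O(2^{k/d})=O(r^{1/d})$ (using $2^k<2r$). For the partitioning property, consider any connected component $C$ of $\R^d\setminus Z(f)$: no factor $f_j$ can vanish anywhere on $C$, since then $f$ would vanish there, so each $f_j$ has a fixed sign on $C$; hence all points of $P$ in $C$ share one common sign vector $(\sgn f_1,\dots,\sgn f_k)\in\{-1,+1\}^k$. By the construction, the set of points of $P$ realizing any prescribed sign vector was obtained from $P$ by $k$ successive bisections, so it has at most $n/2^k\le n/r$ points; in particular $|P\cap C|\le n/r$. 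Points of $P$ lying on some $Z(f_j)$ lie on $Z(f)$ and hence in no component, and $f$, being a product of nonzero polynomials, is nonzero. This establishes the theorem.

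The one genuinely new ingredient over the classical ham-sandwich setting --- and the crux of the whole argument --- is the Veronese lift, which lets a \emph{single} polynomial of degree $D$ bisect as many as $\Theta(D^d)$ sets at once; this is exactly what keeps the degrees $D_j$ growing slowly enough that $\sum_j D_j$ stays $O(r^{1/d})$ rather than blowing up to something like $2^{k}$. The rest is bookkeeping, the only mildly delicate point being the handling of the points lying on the intermediate zero sets, which is harmless because they are absorbed into $Z(f)$ and so belong to no component.
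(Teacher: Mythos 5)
Your proposal is correct and follows essentially the same route as the paper's sketch: iterated polynomial ham-sandwich cuts obtained by lifting the classical ham-sandwich theorem through the Veronese map, taking the product of the bisecting polynomials, and observing that each connected component of the complement of the zero set carries a fixed sign vector and hence at most $n/2^k\le n/r$ points. Your treatment of the points absorbed into the intermediate zero sets and the geometric-sum degree bound matches the paper's argument.
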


The degree in the theorem is asymptotically optimal in the worst case 
because the number of connected components of $\R^d\setminus Z(f)$ is
$O((\deg f)^d)$ for every polynomial $f$  (see, e.g., Warren
\cite[Theorem~2]{w-lbanm-68}).

\begin{proof}[Sketch of proof.]
The Guth-Katz proof uses the  
\emph{polynomial ham sandwich} theorem
of Stone and Tukey~\cite{ST}, which we state here in
a version for finite point sets: 
\emph{If $A_1,\ldots,A_k$ are finite sets in
$\R^d$ and $D$ is an integer satisfying ${D+d\choose d}-1\ge k$, then there
exists a nonzero polynomial $f$ of degree at most $D$ that simultaneously
bisects all the sets $A_i$.}  Here ``$f$ bisects $A_i$'' means that $f>0$ in at
most $\lfloor|A_i|/2\rfloor$ points of $A_i$ and $f<0$ in at most
$\lfloor|A_i|/2\rfloor$ points of~$A_i$; $f$ might vanish at any number
of the points of $A_i$, possibly even at all of them.

Guth and Katz inductively construct collections
$\PP_0,\PP_1,\ldots, \PP_m$ of subsets of $P$.
For $j=0,1,\ldots, m$, $\PP_j$ consists of at most
$2^j$ pairwise-disjoint subsets of $P$, each of size at most
$n/2^j$; the union of these sets does not have to
contain all points of $P$.

Initially, we have $\PP_0=\{P\}$.  The algorithm stops as soon as each subset in 
$\PP_m$ has at most $n/r$ points.  This implies that
$m \le \lceil \log_2 r\rceil$.  Having constructed $\PP_{j-1}$, we use 
the polynomial ham-sandwich theorem to construct a
polynomial $f_j$ that bisects each set of
$\PP_{j-1}$, with $\deg f_j =O(2^{j/d})$
(this is indeed an asymptotic upper bound for the smallest $D$
satisfying ${D+d\choose d}-1\ge 2^{j-1}$, assuming $d$ to be a constant).
For every subset $Q\in \PP_{j-1}$, let $Q^+= \{ q \in Q \mid f_j(q) > 0\}$ 
and $Q^-= \{ q \in Q \mid f_j(q) < 0\}$.  
We set $\PP_j :=\{Q^+,Q^- \mid Q \in \PP_{j-1}\}$;
empty subsets 
are not included in $\PP_j$. 

The desired $r$-partitioning polynomial for $P$ is then the
product $f:=f_1f_2\cdots f_m$.
We have 
$$\deg f=\sum_{j=1}^m\deg f_j = \sum_{j=1}^mO(2^{j/d}) =O(r^{1/d}).$$
By construction, the points of $P$ lying in a single connected
component of $\R^d\setminus Z(f)$ belong to a  single member of  $\PP_m$,
which implies that each connected component contains at most $n/r$ points 
of~$P$.
\end{proof}

\begin{proof}[Sketch of proof of the Stone--Tukey polynomial 
ham-sandwich theorem.]
We begin by observing that ${D+d\choose d}-1$ is the number of all nonconstant
monomials of degree at most $D$ in $d$ variables.  Thus, we fix a
collection $\MM$ of $k\le {D+d\choose d}-1$ such monomials. Let 
$\Phi\colon
\R^d\to\R^k$ be the corresponding \emph{Veronese map}, which  maps a point
$x=(x_1,\ldots,x_d)\in\R^d$ to the $k$-tuple of the values at
$(x_1,\ldots,x_d)$ of the monomials from $\MM$. For example, for $d=2$, $D=3$,
and $k=8\le {3+2\choose 2}-1$, we may use
$\Phi(x_1,x_2)=(x_1,x_2,x_1^2,x_1x_2,x_2^2,x_1^3,x_1^2x_2,x_1x_2^2)
\in \R^8$\iffull,
where $\MM$ is the set of the eight monomials appearing as components 
of $\Phi$\fi. 

Let $B_i:=\Phi(A_i)\subset\R^k$ \iffull
be the image of the given $A_i$ under this
Veronese map, for \else, \fi$i=1,\ldots,k$.  By the standard \emph{ham-sandwich theorem}
(see, e.g., \cite{Ma:ham}), there exists a hyperplane $h$ in $\R^k$ that
simultaneously bisects all the $B_i$'s, in the sense that each open halfspace
bounded by $h$ contains at most half of the points of each of the sets $B_i$.
In a more algebraic language, there is a nonzero $k$-variate linear polynomial,
which we also call $h$, that bisects all the $B_i$'s, in the sense of
being positive on at most half of the points of each $B_i$, and
being negative on at most half of the points of each $B_i$.
Then $f:=h\circ\Phi$ is the desired $d$-variate polynomial of degree at
most $D$ bisecting all the~$A_i$'s.
\end{proof}

\section{Constructing a Partitioning Polynomial} \label{sec:algo}

In this section we present an efficient randomized algorithm that, given a point
set $P$ and a parameter $r < n$, constructs an  $r$-partitioning polynomial. 
The main difficulty in converting the above proof of the Guth-Katz 
partitioning theorem into an efficient
algorithm is the use of the 
ham-sandwich theorem in the possibly
high-dimensional space $\R^k$. A straightforward algorithm for
computing  ham-sandwich cuts in $\R^k$ inspects all
possible ways of splitting the input point sets by a hyperplane,
and has running time about $n^k$.  Compared to
this easy upper bound, the best known
ham-sandwich algorithms can save a factor of about $n$
\cite{LMS}, but this is insignificant in higher
dimensions.  A recent result  of Knauer, Tiwari, and Werner~\cite{KTW} shows
that a certain incremental variant of computing a ham-sandwich cut is
$W[1]$-hard (where the parameter is the dimension), and thus one perhaps should
not expect much better exact algorithms.

We observe that the exact bisection of each $A_i$ is not needed in the
Guth-Katz construction---it is sufficient to replace the Stone--Tukey
polynomial ham-sandwich theorem by a weaker result, as described below.

\heading{Constructing a well-dissecting polynomial.} We say that a 
polynomial $f$ is \emph{well-dissecting} for a point set $A$ if $f>0$ 
on at most $\frac78|A|$ points of $A$ and $f<0$ on at most $\frac78|A|$ 
points of~$A$. Given point sets $A_1,\ldots,A_k$ in $\R^d$ with $n$ 
points in total, we present a Las-Vegas algorithm for 
constructing a polynomial $f$ of degree $O(k^{1/d})$
that is well-dissecting for at least $\lceil k/2\rceil$ of
the $A_i$'s.

As in the above proof of the Stone--Tukey polynomial ham-sandwich theorem, let
$D$ be the smallest integer satisfying ${D+d\choose d}-1\ge k$. We fix a 
collection $\MM$ of $k$ distinct nonconstant monomials of degree 
at most $D$, and let
$\Phi$ be the corresponding Veronese map.  For each $i=1,2,\ldots,k$, we pick a
point $a_i\in A_i$ uniformly at random and compute $b_i:=\Phi(a_i)$.  Let $h$
be a hyperplane in $\R^k$ passing through $b_1,\ldots,b_k$, which can be found
by solving a system of linear equations, in $O(k^3)$ time.

If the points $b_1,\ldots,b_k$ are not affinely independent, then $h$ is not
determined uniquely (this is a technical nuisance, which the reader may want to
ignore on first reading).  In order to handle this case, we prepare in 
advance,
before picking the $a_i$'s, \emph{auxiliary} affinely independent points
$q_1,\ldots,q_k$ in $\R^k$, which are in general 
position with respect to $\Phi(A_1),\ldots,\Phi(A_k)$;
here we mean the ``ordinary'' general position,
i.e., no unnecessary affine dependences, that involve some of the
$q_i$'s and the other points, arise.  The points $q_i$ can be chosen at
random, say, uniformly in the unit cube;  with high probability, 
they have the desired general position property.  (If we do not want to 
assume the capability of choosing a random real number, we can pick the
$q_i$'s uniformly at random from a sufficiently large discrete set.)
If the dimension of the affine hull of
$b_1,\ldots,b_k$ is $k'<k-1$, we choose the hyperplane $h$ through 
$b_1,\ldots,b_k$ and $q_1,\ldots,q_{k-k'-1}$.  If $h$ is not 
unique, i.e., $q_1, \ldots q_{k-k'-1}$ are not affinely independent 
with respect to $b_1,\ldots b_k$, which we can detect while solving
the linear system, we restart the algorithm  by choosing $q_1,\ldots,q_k$
anew and then picking new $a_1,\ldots,a_k$.  
In this way, after a  constant expected number of iterations,
we obtain the
uniquely determined hyperplane $h$ through $b_1,\ldots,b_k$ and
$q_1,\ldots,q_{k-k'-1}$ as above, and we let $f=h\circ\Phi$ denote the corresponding
$d$-variate polynomial.  We refer to these steps as one \emph{trial} of the 
algorithm. For each $A_i$, we check whether $f$ is 
well-dissecting for $A_i$.  If $f$ is well-dissecting for only
fewer than $k/2$ sets, then we discard $f$ and perform another trial.

We now analyze the expected running time of the algorithm.  The intuition is that $f$ is
expected to well-dissect a significant fraction, say  at least half, 
of the sets $A_i$.  This intuition is reflected in the next lemma.
Let $X_i$ be the indicator variable of the
event: \textit{$A_i$ is \textbf{not} well-dissected by $f$}.

\begin{lemma}\label{l:EXi} 
For every $i=1,2,\ldots, k$, $\Ex[X_i] \le 1/4$.  
\end{lemma}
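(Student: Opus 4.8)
The plan is to bound the probability that the random polynomial $f=h\circ\Phi$ fails to well-dissect a fixed set $A_i$, i.e., that $f>0$ on more than $\frac78|A_i|$ points of $A_i$ or $f<0$ on more than $\frac78|A_i|$ points of $A_i$. The key observation is that $f$ vanishes at $a_i$ by construction (since $b_i=\Phi(a_i)$ lies on $h$). So the sign pattern of $f$ on $A_i$ is controlled by where the randomly chosen point $a_i$ sits in the ordering of $A_i$ induced by the function value; the point of vanishing is a ``pivot'' that splits $A_i$ into a positive part and a negative part, and by choosing $a_i$ uniformly at random we expect this pivot to fall near the median.

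More concretely, I would first fix attention on a single $i$ and condition on everything except the choice of $a_i$ — that is, fix the other random points $a_1,\dots,a_{i-1},a_{i+1},\dots,a_k$ and the auxiliary points $q_1,\dots,q_k$. Once these are fixed, the hyperplane $h$ (hence the polynomial $f$) becomes a function of the single remaining choice $a_i\in A_i$. Write $f=h\circ\Phi$; since $h$ is required to pass through $b_i=\Phi(a_i)$, we have $f(a_i)=0$. The idea is that $h$ is the affine hull of $\Phi(a_i)$ together with $k-1$ other fixed points (the $b_\ell$ for $\ell\ne i$ and possibly some $q_\ell$'s), so $h$ depends \emph{affinely} on the moving point $\Phi(a_i)$; consequently, as $a_i$ ranges over $A_i$, the value $f(a_j)=h(\Phi(a_j))$ for a fixed other point $a_j\in A_i$ is an affine function of $\Phi(a_i)$, and in particular the sign of $f$ on $A_i$ changes in a controlled, ``monotone-like'' fashion.

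The clean way to make this rigorous is the following counting argument. Let $m=|A_i|$ and enumerate $A_i=\{p_1,\dots,p_m\}$. For the choice $a_i=p_t$, let $N^+(t)$ and $N^-(t)$ denote the number of points of $A_i$ on which $f$ is positive, resp. negative. Because $f(p_t)=0$ and because (after fixing all else) the dependence of $h$ on $\Phi(p_t)$ is affine, one shows that the number of $t$ for which $f$ is positive on more than $\frac78 m$ points of $A_i$ is at most $\frac14 m$, and similarly on the negative side — intuitively, "$f>0$ on many points" forces $p_t$ to be an extreme element in the relevant one-dimensional projection, and there are few extreme elements. Hence for a uniformly random $t$, the probability of failure on either side is at most $\frac18$ each, so $\Pr[X_i=1\mid \text{rest}]\le \frac14$; averaging over the conditioning gives $\Ex[X_i]\le\frac14$.

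The main obstacle I anticipate is making precise the ``monotone-like'' behavior of the sign of $f$ on $A_i$ as $a_i$ varies, especially in the degenerate situations where the $b_\ell$'s are affinely dependent and the auxiliary points $q_\ell$ enter the definition of $h$; one must check that adding the $q_\ell$'s (chosen independently of the $a_i$'s, in general position) does not destroy the affine dependence of $h$ on $\Phi(a_i)$, and does not introduce extra sign changes on $A_i$. A secondary subtlety is that the random points $a_\ell$ for $\ell \ne i$ are themselves random, so one has to be a little careful that the conditioning argument is valid (it is, by the law of total expectation, since the bound $\frac14$ holds uniformly over all values of the conditioned variables). I would handle the degeneracy by noting that general position of the $q_\ell$'s guarantees $h$ is uniquely determined and varies affinely with $\Phi(a_i)$, reducing everything to the generic case.
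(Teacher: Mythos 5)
Your overall strategy is the same as the paper's: fix $i$, condition on the choices of $a_j$ for $j\ne i$ (and on the $q_\ell$'s), observe that $h$ must then contain a fixed $(k-2)$-flat $F$ (spanned by the affine hull of $\{b_j\mid j\ne i\}$ and the appropriate auxiliary points), so that $h$ ranges over the pencil of hyperplanes through $F$ as $a_i$ varies, and finally bound the number of ``bad'' choices of $a_i$. The gap is in that final, central counting step, which you assert rather than prove. Your arithmetic is also internally inconsistent: if up to $\frac{1}{4}|A_i|$ choices make $f$ positive on more than $\frac{7}{8}|A_i|$ points and up to $\frac{1}{4}|A_i|$ make it negative on more than $\frac{7}{8}|A_i|$ points, then (even though the two failure events are disjoint) you only get $\Pr[X_i=1\mid\text{rest}]\le \frac{1}{2}$; to reach $\frac{1}{4}$ you need the \emph{total} number of bad choices to be at most $\frac{1}{4}|A_i|$, and the stated per-side count of $\frac{1}{4}|A_i|$ does not yield the ``$\frac{1}{8}$ each'' you then invoke.

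More substantively, the ``extreme element in a one-dimensional projection'' picture is not the right one, and a naive version of it fails. The pencil of hyperplanes through $F$ is parameterized by directions in the $2$-plane orthogonal to $F$, i.e., by a \emph{circle}; the induced order on $\Phi(A_i)$ is cyclic, every element is ``extreme'' in a cyclic order, and the split of $A_i$ into positive and negative parts is not governed by a pivot falling near a median of a linear order (a simple averaging/Markov bound on the number of points whose line leaves more than $\frac{7}{8}|A_i|$ points on one side gives only a constant fraction close to $1$, not $\frac{1}{4}$). The paper supplies exactly the missing argument: it projects orthogonally to the $2$-plane orthogonal to $F$, reduces (after perturbation) to distinct points on a circle around the image $F^*$ of $F$, and then uses a ham-sandwich-into-quarters plus rotating-ray argument to show that at most $\frac{1}{4}|B_i^*|$ of the points determine a line with more than $\frac{7}{8}|B_i^*|$ points strictly on one side. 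Some argument of this kind is indispensable, and your sketch does not contain one. (The degenerate case where $\Phi(a_i)$ lies in the affine hull of the other $b_j$'s, which you flag as a worry, is also resolved in this projected picture by sending such points to the auxiliary direction $q^*_{k-k_0-1}$ before perturbing.)
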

\begin{proof} 
Let us fix $i$ and the choices of 
$a_j$ (and thus of $b_j=\Phi(a_j)$) for all $j\ne i$. Let $k_0$ be the 
dimension of $F_0$, the affine hull of $\{b_j \mid j\ne i\}$. Then 
the resulting hyperplane $h$ passes through the $(k-2)$-flat $F$
spanned by $F_0$ and $q_1,\ldots,q_{k-k_0-2}$, irrespective of which 
point of $A_i$ is chosen. If $a_i$, the point chosen from $A_i$, is
such that $b_i = \Phi(a_i)$ lies on
$F_0$, then $h$ also passes through $q_{k-k_0-1}$.

Put $B_i:=\Phi(A_i)$, and let us project the configuration orthogonally to a
2-di\-mension\-al plane $\pi$ orthogonal to $F$. Then $F$ appears as a point
$F^*\in\pi$, and $B_i$ projects to a (multi)set $B_i^*$ in $\pi$. The
random hyperplane $h$ projects to a random line $h^*$ in $\pi$, 
whose choice can be interpreted as follows: pick
$b_i^*\in B_i^*$ uniformly at random; if $b_i^*\ne F^*$, then $h^*$ is the
unique line through $b_i^*$ and $F^*$; otherwise, when $b_i^*=F^*$, 
$h^*$ is the unique line through $F^*$ and $q_{k-k_0-1}^*$;
by construction, $q_{k-k_0-1}^*\ne F^*$.
The indicator variable $X_i$ is $1$
if and only if the resulting $h^*$ has more than $\frac78|B_i^*|$ points 
of $B_i^*$,
counted with multiplicity,
(strictly) on one side.

The special role of $q_{k-k_0-1}^*$ can be eliminated if we first
move the points of $B_i^*$ coinciding with $F^*$ to the point 
$q_{k-k_0-1}^*$, and then slightly perturb the points so as to
ensure that all points of $B_i^*$ are distinct and lie at distinct
directions from $F^*$; it is easy to see
that these transformations cannot decrease the probability of $X_i=1$. 
Finally, we note that the side of $h^*$ containing a point
$b^* \in B_i^*$ only depends on the direction of
the vector $\overrightarrow{F^*b^*}$, so we can also assume the points of 
$B_i^*$ to lie on the unit circle around $F^*$.  

Using (a simple instance of)
the standard planar ham-sandwich theorem, we partition $B_i^*$ into 
two subsets $L_i^*$ and $R_i^*$
of equal size by a line through the center $F^*$.
Then we bisect $L_i^*$  by a ray from $F^*$, and we do the same 
for $R_i^*$.  It is easily checked (see Figure~\ref{f:semi2-quarters})
that there always exist two of the resulting quarters, one of $L_i^*$ 
and one of $R_i^*$ (the ones whose union forms an angle $\le\pi$
between the two bisecting rays),
such that every line connecting $F^*$ with a point 
in either quarter contains at least $\frac14 |B_i^*|$ points of $B_i^*$ 
on each side.
Referring to these quarters as ``good'', 
we now take one of the bisecting rays, say that of $L_i^*$, and rotate
it about $F^*$ away from the good quarter of $L_i^*$. Each of the
first $\frac18 |B_i^*|$ points that the ray encounters has the property that 
the line supporting the ray has at least $\frac18 |B_i^*|$ points of 
$B_i^*$ on each side.  This implies that, for at least half of the 
points in each of the two remaining quarters, the line connecting 
$F^*$ to such a point has at least $\frac18 |B_i^*|$ points of $B_i^*$ 
on each side. Hence at most $\frac14 |B_i|$ points of $B_i$
can lead to a cut that is not well-dissecting for~$B_i$.

\labfig{semi2-quarters}{Illustration to the proof of Lemma~\ref{l:EXi}. }

We conclude that, still conditioned on the choices of $a_j$, $j\ne i$, the
event $X_i=1$ has probability at most $1/4$. 
Since this holds for every choice
of the $a_j$, $j\ne i$, the unconditional probability of $X_i=1$ is also at
most $1/4$, and thus $\Ex[X_i]\le 1/4$ as claimed.  
\end{proof}

Hence, the expected number of sets $A_i$ that are not well-dissected by $f$ is
$$\Ex\bigl[ \sum_{i=1}^k X_i \bigr] = \sum_{i=1}^k \Ex[ X_i] \le k/4.$$ 
By Markov's inequality, with probability at least $1/2$, at least half of the
$A_i$'s are well-dissected by~$f$.  We thus obtain a polynomial that is
well-dissecting for at least half of the $A_i$'s after an expected constant
number of trials.

It remains to estimate the running time of each trial. The points $b_1,
\ldots, b_k$ can be chosen in $O(n)$ time.  Computing $h$ involves solving a
$k\times k$ linear system, which can be done in $O(k^3)$ time using Gaussian
elimination.
Note that we do
\emph{not} actually compute the entire sets $\Phi(A_i)$. No
computation is needed for passing from $h$ to $f$---we just re-interpret 
the
coefficients. To check which of $A_1, \ldots A_k$ are well-dissected by $f$,
we evaluate $f$ at each point of $A = \bigcup_i A_i$.
First we evaluate each of the $k$ monomials in $\MM$ at each point of $A$. If
we proceed incrementally, from lower degrees to higher ones, this can be done
with $O(1)$ operations per monomial and point of $A$, in $O(nk)$ time in
total. Then, in additional $O(nk)$ time, we compute the values of $f(q)$, for
all $q \in A$, from the values of the monomials.  Putting everything together
we obtain the following lemma.

\begin{lemma}\label{l:dissect} Given point sets $A_1,\ldots,A_k$ in $\R^d$ (for
fixed $d$) with $n$ points in total, a polynomial $f$ of degree $O(k^{1/d})$
that is well-dissecting for at least $\lceil k/2\rceil$ of the $A_i$'s can be
constructed in $O(nk+k^{3})$ randomized expected time.  \end{lemma}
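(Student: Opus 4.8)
The plan is to analyze the randomized trial-and-resampling procedure described above, bounding separately the expected number of trials and the cost of a single trial. First I would fix the setup: let $D$ be the smallest integer with ${D+d\choose d}-1\ge k$; since $d$ is fixed, ${D+d\choose d}=\Theta(D^d)$, so $D=\Theta(k^{1/d})$. Fix a collection $\MM$ of $k$ distinct nonconstant monomials of degree at most $D$ and the corresponding Veronese map $\Phi\colon\R^d\to\R^k$. In every trial the output is $f=h\circ\Phi$ for some linear polynomial (hyperplane) $h$ in $\R^k$, and since each monomial in $\MM$ has degree at most $D$ we always have $\deg f\le D=O(k^{1/d})$, irrespective of the random choices. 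Thus the algorithm is \emph{Las Vegas}: every polynomial it ever outputs has the right degree and, upon acceptance, the required well-dissecting property; only the running time is random, and it suffices to bound its expectation.

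Next I would bound the expected number of trials. A trial picks $a_i\in A_i$ uniformly at random, sets $b_i=\Phi(a_i)$, and determines a hyperplane $h$ through $b_1,\dots,b_k$ (using the prepared auxiliary general-position points $q_1,\dots,q_k$ to pin down $h$ uniquely when the $b_i$'s are affinely dependent, and restarting the choice of the $q_j$'s and then of the $a_i$'s if even that fails). As already argued, for a fixed generic sample of the $q_j$'s the hyperplane $h$ is non-unique with probability zero, and the restart loop for the $q_j$'s terminates in expected constant time. Let $X_i$ be the indicator of the event that $A_i$ is \emph{not} well-dissected by the resulting $f$. By Lemma~\ref{l:EXi}, $\Ex[X_i]\le 1/4$, hence by linearity of expectation $\Ex\bigl[\sum_{i=1}^k X_i\bigr]\le k/4$, and by Markov's inequality the probability that $\sum_i X_i>k/2$ is less than $1/2$. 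Therefore a single trial produces a polynomial well-dissecting for at least $\lceil k/2\rceil$ of the $A_i$'s with probability greater than $1/2$, so the number of trials until success is stochastically dominated by a geometric random variable with expectation less than $2$; the expected number of trials is $O(1)$.

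Finally I would estimate the cost of one trial. Sampling the $a_i$ and computing the $k$ images $b_i=\Phi(a_i)$ takes $O(n)$ time — crucially we never materialize the full sets $\Phi(A_i)$, only the $k$ chosen points. Finding $h$ amounts to Gaussian elimination on a $k\times k$ linear system, $O(k^3)$ time, and passing from $h$ to $f$ costs nothing, since we merely reinterpret the coefficients. Testing, for each $i$, whether $f$ is well-dissecting for $A_i$ requires evaluating $f$ at every point of $A=\bigcup_i A_i$: evaluating all $k$ monomials of $\MM$ incrementally (from lower degrees upward) costs $O(1)$ per monomial per point, i.e.\ $O(nk)$ in total; combining them into the value $f(q)$ at each $q\in A$ costs another $O(nk)$; and tallying the signs per set is $O(n)$. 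Hence one trial costs $O(nk+k^3)$, and multiplying by the $O(1)$ expected number of trials yields the claimed bound $O(nk+k^3)$.

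Once Lemma~\ref{l:EXi} is available, the argument is essentially bookkeeping, and I do not expect a real obstacle; the two points that need care are (i) the degeneracy handling — guaranteeing that the auxiliary points $q_j$ make $h$ well-defined and that this contributes only a constant expected factor, which has been dispatched above — and (ii) keeping the well-dissecting test down to $O(nk)$ via the incremental monomial evaluation rather than a naive recomputation of each monomial from scratch, which would blow the bound up by a factor of roughly $D=k^{1/d}$.
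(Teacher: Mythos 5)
Your proposal is correct and follows essentially the same route as the paper: the same trial structure with the auxiliary points $q_j$ for degeneracy, the same appeal to Lemma~\ref{l:EXi} plus linearity of expectation and Markov's inequality to bound the expected number of trials by a constant, and the same $O(n)+O(k^3)+O(nk)$ accounting of a single trial via incremental monomial evaluation. No gaps.
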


\iffull
\begin{proof}[Constructing a partitioning polynomial: Proof of 
Theorem~\ref{thm:partition-algo}.] 
\else
\heading{Constructing a partitioning polynomial.}
\fi
We now describe the
algorithm for computing an $r$-partitioning polynomial $f$.  We essentially imitate
the Guth--Katz construction, with Lemma~\ref{l:dissect} replacing the polynomial
ham-sandwich theorem, but with an additional twist.

The algorithm works in phases.  At the end of the $j$-th phase, 
for $j \ge 1$, we have a family 
$f_1, \ldots, f_j$ of $j$ polynomials and a family $\PP_j$ 
of at most $2^j$ pairwise-disjoint
subsets of $P$, each of size at most $(7/8)^jn$.
Similar to the Guth--Katz construction, 
$\PP_j$ is not necessarily a partition of $P$, since 
the points of $P \cap Z(f_1f_2\cdots f_j)$ do not belong to
 $\bigcup \P_j$.  
Initially, $\PP_0 = \{P\}$.  
The algorithm stops when each set in $\PP_j$ has at most 
$n/r$ points. In the $j$-th phase, the algorithm constructs $f_j$ and
$\PP_j$ from $f_1,\ldots,f_{j-1}$ and $\PP_{j-1}$, as follows.

At the beginning of the $j$-th phase, let 
$\LL_j = \{ Q \in \PP_{j-1} \mid |Q| > (7/8)^jn \}$ be the family of the
``large'' sets in $\P_{j-1}$, and set
$\kappa_j = |\LL_j| \le (8/7)^j$. We also initialize the collection
$\P_j$ to $\P_{j-1}\setminus \LL_j$, the family of ``small''
sets in $\P_{j-1}$. Then we perform
at most $\lceil \log_2 \kappa_j\rceil$ dissecting steps, as follows:
After $s$ steps, we have a family 
$g_1, \ldots, g_s$ of polynomials, the current set $\P_j$, and a 
subfamily $\LL_j^{(s)} \subseteq \LL_j$ of size at most
$\kappa_j/2^s$,  consisting of the members of $\LL_j$
that were not well-dissected by any of $g_1, \ldots, g_s$.
If $\LL_j^{(s)} \ne \emptyset$ we choose, using Lemma~\ref{l:dissect}, 
a polynomial $g_{s+1}$ of degree at most $c(\kappa_j/2^s)^{1/d}$
(with a suitable constant $c$ that depends only on $d$) that 
well-dissects at least half of the members of $\LL_j^{(s)}$. 
For each $Q \in \LL_j^{(s)}$, let
$Q^+ = \{ q \in Q \mid g_{s+1}(q) > 0 \}$ and
$Q^- = \{ q \in Q \mid g_{s+1}(q) < 0 \}$. 
If $Q$ is well-dissected, i.e., $|Q^+|, |Q^-| \le \frac78|Q|$,
then we add $Q^+,Q^-$ to $\P_j$, and otherwise, we add $Q$ to 
$\LL_j^{(s+1)}$.  Note that in the former case the points $q\in Q$
satisfying $g_{s+1}(q)=0$ are ``lost'' and do not participate in the 
subsequent dissections.  By Lemma~\ref{l:dissect}, 
$|\LL_j^{(s+1)}| \le |\LL_j^{(s)}|/2 \le \kappa_j/2^{s+1}$. 

The $j$-th phase is completed when $\LL_j^{(s)} = \emptyset$, in which case
we set\footnote{%
  Note that $f_j$ is not necessarily well-dissecting, because it does not
  control the sizes of subsets with positive or with negative signs.}
$f_j := \prod_{\ell=1}^s g_\ell$.
By construction, each point set in $\P_j$ has 
at most $(7/8)^jn$ points, and the points of $P$ not belonging
to any set of  $\P_j$
lie in $Z(f_1\cdots f_j)$. Furthermore,
$$ 
\deg f_j \le \sum_{s\ge 0} c(\kappa_j/2^s)^{1/d} =O(\kappa_j^{1/d}) ,
$$
where again the constant of proportionality depends only on $d$.
Since every set in $\PP_{j-1}$ is split into at most two sets before being added to 
$\PP_j$, 
$|\PP_j| \le 2|\PP_{j-1}| \le 2^j$.

If $\PP_j$ contains subsets with more than $n/r$ points, 
we begin the $(j+1)$-st  phase with the current $\P_j$;
otherwise the algorithm stops and returns
$f := f_1f_2\cdots f_j$.
This completes the description of the algorithm.

Clearly, $m$, the number of phases of the algorithm, is at most
$\lceil \log_{8/7} r\rceil$.
Following the same argument as in~\cite{GK2},
and as briefly sketched in Section~\ref{sec:GK}, it can be shown
that all points lying in a single connected component of 
$\R^d\setminus Z(f)$ belong to a single member of $\P_m$, and thus
each connected component contains at most $n/r$ points of $P$.
Since the degree of $f_j$ is $O(\kappa_j^{1/d})$,
$\kappa_j \le (8/7)^j$, and $m \le \lceil \log_{8/7} r\rceil$,
we conclude that 
$$
\deg f = O\biggl(\sum_{j=1}^m \kappa_j^{1/d}\biggr) = 
	O\biggl(\sum_{j=1}^m (8/7)^{j/d} \biggr)= O(r^{1/d}).$$

As for the expected running time of the algorithm,
the $s$-th step of the $j$-th phase takes $O(n\kappa_j/2^s+ (\kappa_j/2^s)^3)$ 
expected time, so the $j$-th phase takes a total of 
$O(n\kappa_j+\kappa_j^3)$ expected 
time. Substituting 
$\kappa_j \le (8/7)^j$ in the above bound and summing over all $j$, the overall
expected running time of the algorithm is $O(nr+r^3)$. This completes the 
proof of Theorem~\ref{thm:partition-algo}.
\iffull
\end{proof}
\fi

\heading{Remark.} 
Theorem~\ref{thm:partition-algo}
is employed for the preprocessing in our range-searching algorithms
in Theorems~\ref{t:const-r} and~\ref{t:large-r}. In Theorem~\ref{t:const-r}
we take $r$ to be a large constant, and the expected running time
in Theorem~\ref{thm:partition-algo} is $O(n)$. 
However, in Theorem~\ref{t:large-r}, we require $r$ to be a small
fractional power of $n$, say $r=n^{0.001}$. 
It is a challenging open problem to improve the expected running time in 
Theorem~\ref{thm:partition-algo} to $O(n\polylog(n))$ 
when $r$ is such a small fractional power of $n$.
The bottleneck in the current algorithm 
is the subproblem of evaluating a given
$d$-variate polynomial $f$ of degree $D=O(r^{1/d})$ at $n$ given points;
everything else can be performed in $O(n\polylog (r)+r^{O(1)})$ expected 
time. Finding the signs of $f$ at those points would
actually suffice, but this probably does not make the problem
any simpler.

This problem of \emph{multi-evaluation} of multivariate real polynomials
has been considered in the literature, and there is a nontrivial
improvement over the straightforward $O(nr)$ algorithm,
due to N\"usken and Ziegler \cite{NZ}. 
\iffull
Concretely, in the bivariate case ($d=2$), their algorithm can evaluate
a bivariate polynomial of degree $D\le \sqrt n$ at $n$ given points
using $O(nD^{0.667})$ arithmetic operations. 
It is based on fast matrix multiplication,
and even under the most optimistic 
possible assumption on the speed of matrix multiplication,
it cannot get below $nD^{1/2}$. Although this is significantly faster
than our naive $O(nr)$-time algorithm, which is $O(nD^2)$
in this bivariate case, it is still a far cry
from what we are aiming at.
\else 
However, its running time is still a far cry from what we are aiming at. 
\fi
Let us remark that in a different setting, for polynomials
over finite fields (and over certain more general finite rings), there
is a remarkable method for multi-evaluation by Kedlaya 
and Umans~\cite{KU} achieving $O(((n+D^d)\log q)^{1+\eps})$ running time,
where $q$ is the cardinality of the field.

\section{Crossing a Polynomial Partition with a Range}
\label{sec:cross}

In this section we define the crossing number of  a polynomial partition 
and describe an algorithm for computing the cells of a polynomial 
partition that are crossed by a semialgebraic range, both of which 
will be crucial for our range-searching data structures. 
We begin by recalling a few results on arrangements of algebraic surfaces.
We refer the reader to \cite{SA} for a comprehensive review of such arrangements.

Let $\Sigma$ be a finite set of algebraic surfaces in $\reals^d$. The 
\emph{arrangement} of $\Sigma$, denoted by $\A(\Sigma)$, is the partition
of $\reals^d$ into maximal relatively open connected subsets, 
called \emph{cells}, such that all points within each cell lie in the 
same subset of surfaces of $\Sigma$ (and in no other surface).
If $\F$ is a set of $d$-variate polynomials, then with a slight abuse of notation,
we use $\A(\F)$ to denote the arrangement $\A(\{Z(f)\mid f\in\F\})$
of their zero sets.
We need the following result on arrangements, which follows from Proposition~7.33 and Theorem~16.18 in~\cite{BPR}.

\begin{theorem}[{Basu, Pollack and Roy~\cite{BPR}}]
\label{t:make-arrg}
Let $\F=\{f_1,\ldots,f_s\}$ be a set of $s$ real 
$d$-variate polynomials, each of degree
at most~$\Delta$. Then the arrangement $\A(\F)$ in $\R^d$
has at most 
$O(1)^d(s\Delta)^{d}$ 
cells, and it can be computed
in time at most $T= s^{d+1} \Delta^{O(d^4)}$.
Each cell is described as a semialgebraic set
using at most $T$ polynomials of degree bounded by
$\Delta^{O(d^3)}$. Moreover, the algorithm 
supplies an explicitly computed point in each cell.
\end{theorem}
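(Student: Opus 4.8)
The statement is obtained by assembling two facts from Basu--Pollack--Roy~\cite{BPR}: a quantitative bound on sign conditions and an algorithm that computes them. The first step is to pass to a slightly finer partition. Note that $\A(\F)$ is a coarsening of the partition of $\R^d$ into the semialgebraically connected components of the \emph{realizations of sign conditions} of $\F$, where a sign condition records for each $f_i$ whether it is zero, positive, or negative, whereas $\A(\F)$ records only the zero/nonzero pattern. It therefore suffices to establish all three assertions -- the count, the running time, and the per-cell description -- for this finer partition; the bounds then transfer verbatim to $\A(\F)$, and in fact for the range-searching applications we may simply work with the finer partition. The bound $O(1)^d(s\Delta)^d$ on the number of its cells is a crude form of Proposition~7.33 of~\cite{BPR}: the total number of semialgebraically connected components of all realizable sign conditions of $\F$ is $\sum_{j=0}^{d}\binom{s}{j}\,O(\Delta)^j$, which for fixed $d$ is at most $O(1)^d(s\Delta)^d$. (The nontrivial point, which is the content of~\cite{BPR} rather than a naive product bound, is that one sums rather than multiplies the per-sign-condition component counts.)

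For the algorithmic assertions I would invoke Theorem~16.18 of~\cite{BPR}, which, given $\F$, outputs a description of each semialgebraically connected component of each realizable sign condition -- i.e., each cell of the finer partition -- as a semialgebraic set, together with an explicit sample point in it, in time $T=s^{d+1}\Delta^{O(d^4)}$, with each cell described by at most $T$ polynomials of degree at most $\Delta^{O(d^3)}$. The superexponential-in-$d$ exponents here are the signature of the recursion-on-dimension (roadmap) machinery that underlies the algorithm; since $d$ is a constant, $T$ is polynomial in $s$ and $\Delta$, and the $d$-dependent factors (of the form $d^{O(d^4)}$) are absorbed into the $O(\cdot)$ notation and the $O(1)^d$ prefactor. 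The sample points are returned as real algebraic numbers via univariate representations; in the real-RAM model assumed throughout the paper these are available directly as exact real coordinates, which is the meaning of ``an explicitly computed point in each cell'' here.

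The remaining work is purely a matter of bookkeeping: matching the precise formulations of Proposition~7.33 and Theorem~16.18 in~\cite{BPR} -- which are stated for a general family $\mathcal{P}$ of polynomials and in terms of ``$\mathcal{P}$-semi-algebraic sets'' -- to the wording above, and verifying that the exponents of $\Delta$ quoted from~\cite{BPR} come out as $O(d^4)$ in the time bound and $O(d^3)$ in the cell descriptions. I do not expect a conceptual obstacle; the one point requiring care is to confirm that the cited theorem really outputs full cell \emph{descriptions}, and not merely a set of sample points meeting every cell, so that the ``at most $T$ polynomials of degree $\le \Delta^{O(d^3)}$'' clause is actually justified by it rather than needing a separate argument.
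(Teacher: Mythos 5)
The paper gives no proof of this theorem beyond the citation: it states that the result ``follows from Proposition~7.33 and Theorem~16.18 in~\cite{BPR},'' which are exactly the two ingredients you invoke (the sign-condition component count for the cardinality bound, and the sign-condition computation algorithm for the running time, cell descriptions, and sample points). Your proposal is correct and matches the paper's approach, with the additional (sound) bookkeeping observation that the sign-condition partition refines $\A(\F)$, so the bounds transfer.
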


A key ingredient for the analysis of our range-searching
data structure is the following recent result of Barone and Basu~\cite{BB},
which is a refinement of a series of previous studies; 
e.g., 
\iffull
see~\cite{bpr-ncdfp,BPR}: 
\else
see~\cite{BPR}: 
\fi
\begin{theorem}[{Barone and Basu~\cite{BB}}]
\label{t:basu}
Let $V$ be a $k$-dimensional  algebraic variety in $\R^d$
defined by a finite set $\G$  of $d$-variate polynomials, each of degree at most 
$\Delta$, and let $\F$ be a set of $s$ polynomials of degree at most 
$D\ge\Delta$.  Then  the number of cells
of $\A(\F\cup\G)$ (of all dimensions) that are contained in $V$
is bounded by $O(1)^d \Delta^{d-k}(sD)^k$.
\end{theorem}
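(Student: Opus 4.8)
The plan is to obtain this bound from the critical‑point (Morse‑theoretic) method for counting semialgebraically connected components of realizations of sign conditions, specialized to a variety, in the spirit of the Oleinik--Petrovsky--Milnor--Thom bounds and their variety‑sensitive refinements. First I would reduce to a single defining polynomial: replacing $\G$ by $g:=\sum_{h\in\G}h^2$, of degree at most $2\Delta$, we have $Z(g)=V$, so it suffices to bound the number of cells of $\A(\F\cup\{g\})$ that are contained in $Z(g)$. On $V$ every defining polynomial vanishes, so each such cell is a semialgebraically connected component of the realization in $V$ of some sign condition $\sigma$ on the polynomials of $\F$; hence it is enough to estimate $\sum_\sigma b_0\bigl(\mathrm{Reali}(\sigma)\cap V\bigr)$, the sum ranging over all realizable sign conditions of $\F$.

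Next I would run the standard perturbation‑plus‑critical‑points scheme. Intersect $V$ with a large ball and add an infinitesimal term to $g$ so as to replace $V$ by a smooth bounded variety while not destroying any connected component; perturb the $f_i$ infinitesimally so that the relevant sign conditions are in ``general position'' on this variety (no unnecessary tangencies, transversal intersections wherever dimensions allow). Then each connected component of a (now closed) sign condition contributes at least one critical point of a generic linear form $\ell$ restricted to the corresponding face of $V$, so $\sum_\sigma b_0(\cdots)$ is bounded by the total number of such critical points over all closed faces.

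The crux is a B\'ezout‑type count of these critical points, and this is exactly where the dimension $k$ enters in place of $d$. A closed face is cut out on $V$ by equalities $f_i=0$ for $i$ in some \emph{active set} together with inequalities for the remaining polynomials, and since $\dim V=k$, at most $k$ of the $f_i$ can be imposed as transversal equalities, so the active set has size at most $k$. The $\ell$‑critical points on such a face satisfy a polynomial system consisting of $d-k'$ local equations describing $V$ (degree $O(\Delta)$) and $k'$ equations $f_i=0$ (degree $O(D)$) together with Jacobian rank conditions, where $k'\le k$; a B\'ezout (or multihomogeneous/BKK) estimate bounds their number by $O(1)^d\,\Delta^{d-k'}D^{k'}\le O(1)^d\,\Delta^{d-k}D^{k}$, using $D\ge\Delta$. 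Summing over the at most $\binom{s}{k}\le s^k$ active sets and the $O(1)^d$ local sign patterns around each face yields $O(1)^d\,\Delta^{d-k}(sD)^k$, as claimed. The main obstacle is precisely securing the combinatorial factor $\binom{s}{k}$ rather than the naive $\binom{s}{d}$ --- that is, genuinely exploiting $\dim V=k$ --- which calls for a perturbation scheme and active‑set bookkeeping considerably more delicate than in the classical case $V=\R^d$, and a secondary obstacle is handling singular or lower‑dimensional pieces of $V$ (via stratification, or a global perturbation of $g$, with the overhead absorbed into the $O(1)^d$ factors). This is the technical heart of the work of Barone and Basu, and for the complete argument we refer to \cite{BB}.
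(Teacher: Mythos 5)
The paper does not prove this statement at all: Theorem~\ref{t:basu} is imported as a black box from Barone and Basu \cite{BB}, so there is no ``paper's own proof'' to compare against. Your proposal is a faithful sketch of the strategy actually used in \cite{BB} (and in the precursor work of Basu, Pollack and Roy on cells defined by polynomials on a variety): reduce to counting connected components of realizations of sign conditions of $\F$ on $V$, deform to a smooth bounded situation, and bound components by critical points of a generic linear form via a B\'ezout-type count, with the key observation that transversality on a $k$-dimensional variety forces active sets of size at most $k$, which is what yields $s^k$ and $\Delta^{d-k}D^k$ rather than the exponent $d$. The reduction to a single polynomial $g=\sum_{h\in\G}h^2$ and the identification of cells of $\A(\F\cup\G)$ inside $V$ with components of sign conditions of $\F$ on $V$ are both sound (the degree doubling to $2\Delta$ is absorbed in $O(1)^d$). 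However, as you acknowledge, what you have written is an outline and not a proof: the perturbation scheme, the replacement of strict by weak sign conditions so that each component is closed and bounded and carries a critical point, the treatment of singular strata of $V$, and above all the bookkeeping that secures the factor $\binom{s}{k}$ are precisely the technical content of \cite{BB} and are deferred to that reference. Since the paper itself cites the result without proof, your treatment is at the same level of rigor as the paper's; just be clear that you have reproduced the architecture of the known proof rather than supplied one.
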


\heading{The crossing number of polynomial partitions. }
%
Let $P$ be a set of $n$ points in $\R^d$,
and let $f$ be an $r$-partitioning polynomial for~$P$.
Recall that the \emph{polynomial partition} 
$\thePartit=\thePartit(f)$
induced by $f$ is the partition of $\R^d$ into
the zero set $Z(f)$ and the connected components
$\omega_1,\omega_2,\ldots,\omega_t$ of $\R^d\setminus Z(f)$. 
As already noted, Warren's theorem~\cite{w-lbanm-68}
implies that $t=O(r)$.
We call $\omega_1,\ldots,\omega_t$ the \emph{cells} of $\thePartit$
(although they need not be cells in the sense typical,
e.g., in topology; they need not even be simply connected).
$\thePartit$ also induces a partition $P^*, P_1,\ldots, P_t$ of $P$,
where $P^*=P\cap Z(f)$ is the \emph{exceptional part},
and $P_i=P\cap \omega_i$, for $i=1,\ldots,t$, are the \emph{regular parts}.
By construction, $|P_i|\le n/r$ for every $i=1,2,\ldots,t$,
but we have no control over the size of $P^*$---this 
will be the source of most of our technical difficulties.

Next, let $\gamma$ be a range in ${\Gamma}_{d,\Delta,s}$.
We say that $\gamma$ \emph{crosses} a cell $\omega_i$ if 
neither $\omega_i\subseteq \gamma$ nor $\omega_i\cap\gamma=\emptyset$.
The \emph{crossing number} of $\gamma$ is the number of 
cells of $\thePartit$ crossed by $\gamma$, and the \emph{crossing number}
of $\thePartit$ (with respect to~${\Gamma}_{d,\Delta,s}$)
is the maximum of the crossing numbers of all 
$\gamma\in {\Gamma}_{d,\Delta,s}$.
Similar to many previous range-searching algorithms~\cite{chan-opttrees,Ma:ept,Ma-ehc},
the crossing number of $\Omega$ will determine the query time of our 
range-searching algorithms described in 
Sections~\ref{sec:range1} and~\ref{sec:range2}.

\begin{lemma}\label{l:cr} 
If $\thePartit$ is a polynomial partition induced by an $r$-partitioning 
polynomial of degree at most $D$, then the crossing number of 
$\thePartit$ with respect to~${\Gamma}_{d,\Delta,s}$, with $\Delta\le D$,
is at most $C s\Delta D^{d-1}$, where $C$ is a suitable 
constant depending only on~$d$.
\end{lemma}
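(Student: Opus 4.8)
The plan is to fix a query range $\gamma\in\Gamma_{d,\Delta,s}$ and bound the number of cells of $\thePartit$ it crosses by (i) reducing this to the number of cells of $\thePartit$ that meet the zero set of some polynomial defining $\gamma$, and (ii) bounding the latter via the Barone--Basu estimate (Theorem~\ref{t:basu}).

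First I would write $\gamma$ as a Boolean combination of sets of the form $\{x\in\R^d : g_i(x)\ge 0\}$, $i=1,\dots,s$, where each $g_i$ is a $d$-variate polynomial of degree at most $\Delta$; we may discard any $g_i$ that is identically zero, as it imposes the trivial constraint $\{g_i\ge 0\}=\R^d$, so we assume each $g_i\not\equiv 0$. The crucial point is that membership of a point $x$ in $\gamma$ depends only on the sign vector $(\sgn g_1(x),\dots,\sgn g_s(x))$. Hence if a cell $\omega$ of $\thePartit$ is crossed by $\gamma$---that is, $\omega\cap\gamma\ne\emptyset$ and $\omega\not\subseteq\gamma$---then this sign vector cannot be constant on the connected set $\omega$, so some $g_i$ must vanish at a point of $\omega$; equivalently, $\omega\cap Z(g_i)\ne\emptyset$. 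Consequently the crossing number of $\gamma$ is at most $\sum_{i=1}^s N_i$, where $N_i$ denotes the number of cells of $\thePartit$ (connected components of $\R^d\setminus Z(f)$, where $f$ is the $r$-partitioning polynomial, $\deg f\le D$) whose intersection with $Z(g_i)$ is nonempty.

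Next I would bound each $N_i$ using the arrangement $\A(\{f,g_i\})$. On every cell of $\A(\{f,g_i\})$ both $f$ and $g_i$ have constant sign, so each cell $\omega$ of $\thePartit$ is a disjoint union of cells of $\A(\{f,g_i\})$, and among these the ones on which $g_i$ vanishes are precisely the arrangement cells contained in $Z(g_i)$. Thus $\omega\cap Z(g_i)\ne\emptyset$ forces $\omega$ to contain at least one cell of $\A(\{f,g_i\})$ lying inside $Z(g_i)$, and distinct cells of $\thePartit$ contain distinct such arrangement cells. Therefore $N_i$ is at most the number of cells of $\A(\{f,g_i\})$ contained in the variety $V_i:=Z(g_i)$, which, since $g_i\not\equiv 0$, has dimension $k_i\le d-1$. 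Applying Theorem~\ref{t:basu} with $\G=\{g_i\}$ of degree at most $\Delta$, with $\F=\{f\}$, and with degree parameter $D$ (legitimate since $\deg f\le D$ and $\Delta\le D$) yields
$$N_i \;\le\; O(1)^d\,\Delta^{\,d-k_i}D^{\,k_i} \;\le\; O(1)^d\,\Delta D^{\,d-1},$$
where the last inequality uses $k_i\le d-1$ together with $\Delta\le D$ (so $\Delta^{d-k_i}D^{k_i}=\Delta\cdot\Delta^{d-1-k_i}D^{k_i}\le\Delta D^{d-1}$). Summing over $i=1,\dots,s$ bounds the crossing number of $\gamma$ by $Cs\Delta D^{d-1}$ with $C=O(1)^d$ depending only on $d$; taking the maximum over $\gamma\in\Gamma_{d,\Delta,s}$ gives the lemma.

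The step I expect to require the most care is the reduction in the second paragraph: the clean implication ``$\omega$ is crossed $\Rightarrow$ $\omega$ meets some $Z(g_i)$'' hinges on membership in $\gamma$ being a function of the sign pattern of the defining polynomials (valid for any fixed presentation of $\gamma$ as a Boolean combination), and on the elementary but slightly fussy bookkeeping that identifies cells of $\thePartit$ meeting $Z(g_i)$ with cells of $\A(\{f,g_i\})$ contained in $Z(g_i)$. Once this is in place, all the quantitative content is supplied by Theorem~\ref{t:basu}.
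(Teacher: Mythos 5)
Your proposal is correct and follows essentially the same route as the paper: reduce the crossing of $\gamma$ to a cell of $\thePartit$ meeting the zero set of one of the at most $s$ defining polynomials, and then bound the number of such cells by counting the cells of $\A(\{f,g_i\})$ contained in $Z(g_i)$ via Theorem~\ref{t:basu}. Your write-up is in fact somewhat more careful than the paper's (the sign-vector argument, discarding identically zero $g_i$, and the injectivity of the assignment of arrangement cells to partition cells), but the substance is identical.
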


\begin{proof}
Let $\gamma\in {\Gamma}_{d,\Delta,s}$; then 
$\gamma$ is a Boolean combination of
up to $s$ sets of the form $\gamma_j:=\{x\in\R^d \mid g_j(x)\ge 0\}$,
where $g_1,\ldots,g_s$ are polynomials of degree at most
$\Delta$. If $\gamma$ crosses a cell $\omega_i$, 
then at least one of the ranges $\gamma_j$ also crosses $\omega_i$, 
and thus it suffices to establish that the crossing number 
of any range $\gamma$, defined by a single $d$-variate
polynomial inequality $g(x)\ge 0$ of degree at most $\Delta$,
is at most $C\Delta D^{d-1}$.

We apply Theorem~\ref{t:basu} with $V:=Z(g)$, which is an algebraic variety
of dimension $k\le d-1$, and with $s=1$ and $\F=\{f\}$, where $f$ is
the $r$-partitioning polynomial. Then, for each cell $\omega_i$ crossed
by $\gamma$, $\omega_i\cap Z(g)$ is a nonempty
union of some of the cells in 
$\A(\F\cup\{g\}) = \A(\{f,g\})$ that lie in $V$. Thus, the crossing number of 
$\gamma$ is
at most $O(1)^d\Delta D^{d-1}$.
\end{proof}

\heading{Algorithmic issues. }
We need to perform the following algorithmic primitives (for $d$ \emph{fixed}
as usual) for the range-searching algorithms that we will later present:

\begin{itemize}
\item[{\bf (A1)}] 
Given an $r$-partitioning polynomial $f$
of degree $D=O(r^{1/d})$, compute (a suitable representation of)
the partition $\thePartit$ and the induced partition
of $P$ into $P^*,P_1,\ldots,P_t$. 

By computing  $\A(\{f\})$, using Theorem~\ref{t:make-arrg}, and then 
testing the membership of each point $p\in P$ in each cell $\omega_i$
in time polynomial in~$r$, the above operation can be performed in
$O(nr^{c})$ time,\iffull\footnote{%
Of course, this is somewhat inefficient,
and it would be nice to have a fast point-location algorithm
for the partition $\thePartit$---this would be the second step, together
with an improved construction of an $r$-partitioning polynomial $f$
(concretely, an improved multi-point evaluation procedure for $f$)
as discussed at the end of Section~\ref{sec:algo}, needed to improve
the preprocessing time in Theorem~\ref{t:large-r}.}
\fi
where $c=d^{O(1)}$.

\item[{\bf (A2)}]
Given (a suitable representation of)
$\thePartit$ as in (A1) and a query range 
$\gamma\in{\Gamma}_{d,\Delta,1}$, i.e., a range defined
by a single $d$-variate polynomial $g$ of degree $\Delta\le D$, 
compute which of the cells of $\thePartit$ are crossed by $\gamma$ and which are
completely contained in $\gamma$.

We already have the arrangement $\A(\{f\})$, and we compute
$\A(\{f,g\})$. For each cell of  $\A(\{f,g\})$ contained
in $Z(g)$, we locate its representative point in $\A(\{f\})$,
and this gives us the cells crossed by $\gamma$. For the remaining
cells, we want to know whether they are inside $\gamma$ or outside,
and for that, it suffices to determine the sign of $g$ at the representative
points.
Using Theorem~\ref{t:make-arrg}, the above task  can thus be accomplished
in time $O(r^c)$, with $c=d^{O(1)}$.
\end{itemize}

\section{Constant Fan-Out Partition Tree}
\label{sec:range1}

We are now ready to describe our first data structure for 
${\Gamma}_{d,\Delta,s}$-range searching, which is 
 a constant fan-out (branching degree) partition tree,
and which works for points in general position.

\iffull
\begin{proof}[Proof of Theorem~\ref{t:const-r}.]
\fi
Let $P$ be a set of $n$ points in $\R^d$, and let $\Delta, s$ 
be constants. 
We choose $r$ as a (large) constant depending on $d,\Delta,s$, and 
the prespecified parameter $\eps$. 
We assume $P$ to be in $D_0$-general position for some sufficiently large 
constant $D_0 \gg r^{1/d}$.  We construct a partition tree $\T$ of 
fan-out $O(r)$ as follows.  We first construct an $r$-partitioning
polynomial $f$ for $P$ using Theorem~\ref{thm:partition-algo}, and compute the 
partition $\thePartit$ of $\R^d$ induced by $f$, as well as the corresponding 
partition $P=P^*\cup P_1\cup\cdots\cup P_t$ of $P$, where $t=O(r)$. 
Since $r$ is a constant, the (A1) operation, discussed in
Section~\ref{sec:cross}, performs this computation in $O(n)$ time. 
We choose $D_0$ so as to ensure that it is at least $\deg f$, 
and then our assumption that $P$ is in $D_0$-general position implies that
the size of $P^* = P \cap Z(f)$ is bounded by $D_0$.

We set up the root of $\T$, where we store 
\iffull
\begin{enumerate}
\item[(i)] the partitioning polynomial $f$, and a suitable representation
of the partition $\thePartit$;
\item[(ii)] a list of the points of the exceptional part $P^*$; and 
\item[(iii)] $w(P_i)$, the sum of the weights of the points of the regular part $P_i$, 
for each $i=1,2,\ldots,t$.
\end{enumerate}
\else
the partition polynomial $f$, a suitable representation
of the partition $\thePartit$, a list of the points of the exceptional part $P^*$,
and $w(P_i)$, the sum of weights of all points of $P_i$,
for each $i=1,2,\ldots,t$.
\fi
The regular parts $P_i$ are not stored explicitly at the root.
Instead, for each $P_i$ we recursively build a subtree representing it.
The recursion terminates, at leaves of $\T$, as soon as we reach point sets 
of size smaller than a suitable constant $n_0$. The points of each such
set are stored explicitly at the corresponding leaf of $\T$.

Since each node of $\T$ requires only a constant amount
of storage and each point of $P$ is stored at only one node of $\T$, 
the total size of $\T$ is $O(n)$. The preprocessing time
is $O(n\log n)$ since $\T$ has depth $O(\log_r n)$
and each level is processed in $O(n)$ time.

To process a query range $\gamma\in {\Gamma}_{d,\Delta,s}$,
we start at the root of $\T$ and maintain a global counter
which is initially set to $0$.
Among the cells $\omega_1,\ldots,\omega_t$ of the 
partition $\thePartit$ stored at the root, we find, using the (A2)
operation, those completely contained in $\gamma$, and those 
crossed by $\gamma$. Actually, we compute a superset
of the cells that $\gamma$ crosses, namely, the cells crossed
by the zero set of at least one of the (at most $s$) 
polynomials defining $\gamma$.
For each cell $\omega_i\subseteq \gamma$, we add
the weight $w(P_i)$ to the global counter.
We also add to the global counter the weights of the
points in $P^*\cap\gamma$, which we find by testing each
point of $P^*$ separately. Then we recurse in each subtree
corresponding to a cell $\omega_i$ crossed by $\gamma$ 
(in the above weaker sense).  
The leaves, with point sets of size $O(1)$, are processed 
by inspecting their points individually. 
By Lemma~\ref{l:cr}, the number of cells crossed by any of the polynomials
defining $\gamma$ 
at any interior node of $\T$ is at most 
$Cs\Delta D^{d-1}\le C'r^{1-1/d}$,
where $C'=C'(d,s,\Delta)$ is a constant independent of~$r$.

The query time $\qtime(n)$ obeys the 
following recurrence:
$$
\qtime(n)\le \left \{ 
	\begin{array}{ll}
	 C'r^{1-1/d} \qtime(n/r) + O(1) & \mbox{for $n > n_0$,} \\[1mm]
	O(n) & \mbox{for $n \le n_0$,}
	\end{array} \right .
$$
It is well known (e.g., see~\cite{Ma:ept}), and easy to check, that the recurrence solves
to $\qtime(n)=O(n^{1-1/d+\eps})$, for every fixed $\eps>0$, with an
appropriate sufficiently large choice of $r$ 
as a function of $C'$ and $\eps$, and with an appropriate choice of $n_0$. 
This concludes the proof of Theorem~\ref{t:const-r}. 
\iffull
\end{proof}
\fi

\iffull
\begin{proof}[Boundary-fuzzy range searching: 
Proof of Corollary~\ref{c:fuzzy}.]
Now we consider
 the case where the points of $P$ are not necessarily in $D_0$-general position.
As was mentioned in the introduction, we apply a general perturbation 
scheme of Yap \cite{y-gctsp-90} to the previous range-searching algorithm.

Yap's scheme is applicable to an algorithm whose input is
a sequence of real numbers (in our case, the $dn$ point coordinates
plus the coefficients in the polynomials specifying the query range).
It is assumed that the algorithm makes decision steps by way of
evaluating polynomials with rational coefficients
taken from a finite set $\PP$, where the input
parameters are substituted for the variables. The algorithm makes
a 3-way branching depending on the sign of the evaluation.
The set $\PP$ does not depend on the input. 
The input is considered
degenerate if one of the signs in the tests is~0. 

Yap's scheme provides a black box for evaluating the polynomials from $\PP$
that, whenever the actual value is $0$, 
also supplies a nonzero sign, $+1$ or $-1$,
which the algorithm may use for the branching, instead of the zero sign.
Thus, the algorithm never ``sees'' any degeneracy.
Yap's method guarantees that these signs are consistent, i.e.,
for every input, the branching done in this way corresponds to some
infinitesimal perturbation of the input sequence, and so does the
output of the algorithm (in our case, the answer to a range-searching
query).

For us, it is important that if the degrees of the polynomials in $\PP$
are bounded by a constant, the black box also operates in time bounded by
a constant (which is apparent from the explicit specification
in  \cite{y-gctsp-90}). Thus, applying the perturbation scheme 
influences the running time only by a multiplicative constant.

It can be checked the range-searching
algorithm presented above is of the required kind, with all
branching steps based on the sign of suitable polynomials 
in the coordinates of the input points and in the coefficients of the 
polynomials in the query range, and the degrees of these polynomials 
are bounded by a constant. For producing the
partitioning polynomial $f$, we solve systems of linear equations,
and thus the coefficients of $f$ are given by certain determinants
obtained from Cramer's rule. The computation of the polynomial partition
and locating points in it is also based on the signs of suitable
bounded-degree polynomials, as can be checked by inspecting the
relevant algoritms, and similarly for intersecting a polynomial
partition with the query range. The key fact is that all computations
in the algorithm are of constant-bounded depth---each of the values
ever computed is obtained from the input parameters by a constant number
of arithmetic operations.

We also observe that when Yap's scheme is applied, the algorithm
never finds more than $D_0$ points in the exceptional set $P^*$
(in any of the nodes of the partition tree). Indeed, if $D_0+1$ input points
lie in the zero set of a polynomial $f$ as in the algorithm,
then a certain polynomial in the coordinates of these $D_0+1$ points
vanishes (see, e.g., \cite[Lemma~6.3]{KMS}). Thus, assuming that
the algorithm found $D_0+1$ points on $Z(f)$, it could test
the sign of this polynomial at such points, and the black box would return
a nonzero sign, which would contradict the consistency of
Yap's scheme. 

After applying Yap's scheme,  the
preprocessing cost, storage, and query time remain asymptotically 
the same as in Theorem~\ref{t:const-r} (but with
larger constants of proportionality). 
Since the output of the
algorithm corresponds to some infinitesimally perturbed
version of the input (point set and query range),
we obtain a boundary-fuzzy answer for the original point set. 
\end{proof}
\else
 If the points of $P$ are not in $D_0$-general position,
we perturb them infinitesimally using 
the general perturbation scheme of Yap \cite{y-gctsp-90},
 so that the perturbed set is in $D_0$-general 
position.
Then we construct the above data structure on the 
perturbed point set.
By answering the query for this perturbed set,
we obtain a boundary-fuzzy answer for the original point set. 
The preprocessing cost, storage, and query time remain asymptotically 
the same as in Theorem~\ref{t:const-r}. This concludes the proof of 
Corollary~\ref{c:fuzzy}.
\fi

\section{Decomposing a Surface into Monotone Patches}
\label{s:CAD}

As mentioned in the Introduction, 
if we construct an $r$-partitioning polynomial $f$
for an arbitrary point set $P$, the exceptional set $P^*=P\cap Z(f)$ may 
be large, as is schematically indicated in
Fig.~\ref{f:semi2-patches}~(left). Since $P^*$ is not partitioned
by $f$ in any reasonable sense, it must be handled differently, as described below.

\labfig{semi2-patches}{The zero set of the partitioning polynomial (left), and its decomposition into monotone patches that project to the hyperplane $H$ bijectively. Only the $1$-dimensional patches are labeled.}

Following the terminology in~\cite{Hi,SS2}, we call a direction 
$v \in\sphere^{d-1}$ \emph{good} for $f$ if, for every $a\in\reals^d$, the
polynomial $p(t)=f(a+vt)$ does not vanish identically; that is,
any line in direction $v$ intersects $Z(f)$ at finitely many points. As 
argued in~\cite[pp.~304--305 and pp.~314--315]{SS2}, a random direction is good for $f$ with probability~1. 
By choosing a good direction and rotating the coordinate system, 
we assume that the $x_d$-direction, referred to
as the \emph{vertical} direction, is good for $f$.

In order to deal with $P^*$, we partition $Z(f)$ into finitely many pieces,
called \emph{patches}, in such a way that each of the patches is
\emph{monotone} in the vertical direction, meaning that every line 
parallel to the $x_d$-axis
intersects it at most once. This is illustrated,
in the somewhat trivial 2-dimensional setting,
in Fig.~\ref{f:semi2-patches}
\iffull
(right):
\else
(middle),
\fi
 there are five one-dimensional patches $\pi_1,\ldots,\pi_5$, plus
four 0-dimensional patches. 
Then we treat each patch $\pi$ separately:
We project the point set $P^*\cap \pi$ orthogonally to the coordinate
hyperplane $H:=\{x_d=0\}$, and we preprocess the projected set,
denoted $P^*_\pi$,
for range searching with suitable ranges. These ranges are projections
of ranges of the form $\gamma\cap\pi$, where $\gamma\in\Gamma_{d,\Delta,s}$
is one of the original ranges. In Fig.~\ref{f:semi2-patches}~
\iffull
(right),
\else
(middle),
\fi
the patch 
$\pi_1$ is drawn thick, a range $\gamma$ is depicted as a gray disk,
and the projection $\gamma_{\pi_1}$ of $\gamma\cap\pi_1$ is shown as
a thick segment in~$H$.

The projected range $\gamma_\pi$ is typically more complicated
than the original range $\gamma$ (it involves more polynomials of larger degrees),
but, crucially, it is only $(d-1)$-dimensional, and 
$(d-1)$-dimensional queries can be processed somewhat more
efficiently than $d$-dimensional ones, which makes the
whole scheme work. We will discuss this in more detail
in Section~\ref{sec:range2} below, but first we recall the notion of
\emph{cylindrical algebraic decomposition} (CAD,
or also \emph{Collins decomposition}), which is
a tool that allows us to decompose $Z(f)$ into monotone patches,
and also to compute the projected ranges~$\gamma_\pi$.

Given a finite set $\F=\{f_1,\ldots,f_s\}$
of $d$-variate polynomials, a \emph{cylindrical
algebraic decomposition adapted to $\F$}  is a way of decomposing
$\R^d$ into a finite collection of relatively open \emph{cells},
which have a simple shape (in a suitable sense),
 and which refine the arrangement $\A(\F)$.
We refer, e.g., to \cite[Chap.~5.12]{BPR}
for the definition and construction of the ``standard'' CAD.
Here we will use a simplified variant, which can
be regarded as the ``first stage'' of the standard CAD, and
which is captured by \cite[Theorem~5.14, Algorithm~12.1]{BPR}.
We also refer to \cite[Appendix~A]{SS2} for a concise treatment,
which is perhaps more accessible at first encounter.

Let $\F$ be as above.
To obtain the first-stage CAD for $f$, one constructs 
a suitable collection $\E=\E(\F)$ of polynomials 
in the variables $x_1,\ldots,x_{d-1}$
(denoted by ${\rm Elim}_{X_k}(\F)$ in \cite{BPR}). Roughly speaking,
the zero sets of the polynomials in $\E$, viewed as subsets of 
the coordinate hyperplane $H$ (which is identified with $\R^{d-1}$),
contain the projection onto $H$ of all intersections $Z(f_i)\cap Z(f_j)$,
$1\le i<j\le s$, as well as the projection of the loci in $Z(f_i)$
where $Z(f_i)$ has a vertical tangent hyperplane, or a singularity
of some kind.  The actual construction of $\E$ is somewhat more complicated,
and we refer to the aforementioned references 
for more details.

\labfig{semi2-CAD}{A schematic illustration of the first-stage cylindrical algebraic decomposition.}

Having constructed $\E$, the first-stage CAD is obtained
as the arrangement $\A(\F\cup\E)$ in $\R^d$, where the polynomials
in $\E$ are now considered as $d$-variate polynomials (in which
the variable $x_d$ is not present). In geometric terms, 
we erect a ``vertical wall'' in $\R^d$
over each zero set within $H$ of a $(d-1)$-variate polynomial
from $\E$, and the CAD is the arrangement
of these vertical walls plus the zero sets of $f_1,\ldots,f_s$.
The first-stage CAD is illustrated in 
Fig.~\ref{f:semi2-CAD},
for the same (single) polynomial as in 
Fig.~\ref{f:semi2-patches}~(left). 

In our algorithm, we are interested in the cells of the CAD
that are contained in some of the sero sets $Z(f_i)$; these are going to be the
monotone patches alluded to above.  
\iffull
We note that
using the first-stage CAD for the purpose of decomposing
$Z(f)$ into monotone patches seems somewhat wasteful.
For example, the number of patches in Fig.~\ref{f:semi2-patches}
is considerably smaller than the number of patches in the CAD in
Fig.~\ref{f:semi2-CAD}.
But the CAD is simple and well known, and
(as will follow from the analysis in Section~\ref{sec:range2})
possible improvements in the number of patches (e.g.\ using the vertical-decomposition technique~\cite{SA}) do not seem
to influence our asymptotic bounds
on the performance of the resulting range-searching data structure.
\else
\fi
The following lemma summarizes the properties of the first-stage CAD
that we will need; we refer to \cite[Theorem~5.14, Algorithm~12.1]{BPR}
for a proof.
\begin{lemma}[Single-stage CAD]\label{l:CAD}
Given a set $\F=\{f_1,\ldots,f_s\}\subset \R[x_1,\ldots,x_d]$
of polynomials,
each of degree at most~$D$, there is a set $\E=\E(\F)$ of 
$O(s^2 D^3)$ polynomials in $\R[x_1,\ldots,x_{d-1}]$,
each of degree $O(D^2)$, which can
be computed in time $s^2 D^{O(d)}$, such that the first-stage
CAD defined by these polynomials, i.e., the arrangement
$\A(\F\cup\E)$ in $\R^d$, has the following properties:
\begin{enumerate}
\item[\rm(i)] {\rm (``Cylindrical'' cells) } 
For each cell $\sigma$ of $\A(\F\cup\E)$, there exists
a unique cell $\tau$ of the $(d-1)$-dimensional arrangement
$\A(\E)$ in $H$, such that one of the following possibilities occur:
\begin{enumerate}
\item[\rm (a)]  $\sigma=\{(x,\xi(x))\mid x\in\tau\}$, where
$\xi\colon\tau\to\R$ is a continuous semialgebraic function
(that is, $\sigma$ is the graph of $\xi$ over $\tau$).
\item[\rm (b)] $\sigma=\{(x,t)\mid x\in\tau, t\in (\xi_1(x),\xi_2(x))\}$,
where each $\xi_i$, $i=1,2$, is either a continuous semialgebraic 
real-valued function on $\tau$, or the constant function $\tau\to\{\infty\}$,
or the constant function $\tau\to\{-\infty\}$, and $\xi_1(x)<\xi_2(x)$
for all $x\in\tau$ (that is, $\sigma$ is a portion 
of the ``cylinder'' $\tau\times\R$ between two consecutive graphs).
\end{enumerate}
\item[\rm(ii)] {\rm (Refinement property) }
If $\F'\subseteq\F$, then $\E'=\E(\F')\subseteq\E$,
and thus each cell of $\A(\F\cup\E)$ is fully contained in
some cell of $\A(\F'\cup\E')$.
\end{enumerate}
\end{lemma}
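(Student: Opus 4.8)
The plan is to reduce Lemma~\ref{l:CAD} to the standard construction of a cylindrical algebraic decomposition, as presented in \cite[Theorem~5.14, Algorithm~12.1]{BPR}, keeping track of the quantitative parameters (number and degree of the eliminant polynomials, running time) that are not always stated explicitly there. Concretely, I would proceed as follows.

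\emph{Construction of $\E$.} First I would recall the elimination (projection) operator ${\rm Elim}_{X_d}(\F)$ from \cite{BPR}: given $\F=\{f_1,\dots,f_s\}\subset\R[x_1,\dots,x_d]$ of degree at most $D$, one forms, for each $f_i$ and each pair $f_i,f_j$, a collection of polynomials in $x_1,\dots,x_{d-1}$ obtained from (subresultant) coefficients of $f_i$ and its $x_d$-derivative, and of $f_i$ and $f_j$, with respect to the variable $x_d$. Each such polynomial has degree $O(D^2)$ in $x_1,\dots,x_{d-1}$ (a product of two polynomials of degree $\le D$, or a subresultant thereof), and there are $O(D)$ of them per pair $(i,j)$ and per single $i$, so $|\E|=O(s^2D^3)$, matching the bound in the statement. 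Computing all subresultants for a single pair takes $D^{O(1)}$ arithmetic operations on polynomials in $d-1$ variables, each such operation costing $D^{O(d)}$; summing over the $O(s^2)$ pairs gives the claimed $s^2D^{O(d)}$ time bound.

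\emph{The cylindricity property (i).} The set $\E$ is designed so that over each connected cell $\tau$ of the $(d-1)$-dimensional arrangement $\A(\E)$ in $H$, the number of real roots in $x_d$ of each $f_i(x,\cdot)$, and their relative order, is constant; this is exactly the ``delineability'' guarantee that drives Collins's CAD and is the content of \cite[Theorem~5.14]{BPR}. Therefore, over $\tau$, the real roots of $f_1(x,\cdot),\dots,f_s(x,\cdot)$ are given by continuous semialgebraic functions $\xi_1(x)<\cdots<\xi_\ell(x)$ (with $\ell$ constant on $\tau$), and the arrangement $\A(\F\cup\E)$, restricted to the cylinder $\tau\times\R$, decomposes it into the graphs of the $\xi_m$ (case (a)) and the open strips between consecutive graphs, including the two unbounded ones with $\xi=\pm\infty$ (case (b)). Each cell $\sigma$ of $\A(\F\cup\E)$ projects into a single cell $\tau$ of $\A(\E)$ because the defining polynomials of $\E$ do not involve $x_d$, so the sign vector of the $\E$-polynomials is constant along the projection; uniqueness of $\tau$ is immediate.

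\emph{The refinement property (ii).} Here I would simply observe that the elimination operator is monotone in $\F$: ${\rm Elim}_{X_d}$ is built pairwise from the elements of $\F$, so $\F'\subseteq\F$ implies $\E(\F')\subseteq\E(\F)$ (up to the harmless convention of discarding identically-zero polynomials). Consequently $\A(\F\cup\E)$ refines $\A(\F'\cup\E')$ cell by cell, since refining the set of defining polynomials can only subdivide cells. The one place that needs a little care — and which I expect to be the main obstacle in writing this cleanly — is ensuring that the degree and cardinality bounds I quote are really the ones that come out of the specific variant of the projection operator used in \cite[Algorithm~12.1]{BPR} (there are several projection operators in the literature — Collins's original, Hong's, McCallum's — with slightly different size bounds), and that the ``$O(D^2)$ degree, $O(s^2D^3)$ count'' I claim is consistent with, and not weaker than, what that reference actually proves; I would pin this down by citing the precise statement there rather than re-deriving the subresultant bookkeeping. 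Everything else is a direct unwinding of the cited theorem.
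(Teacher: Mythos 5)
Your proposal takes essentially the same route as the paper, which gives no proof of this lemma at all and simply cites \cite[Theorem~5.14, Algorithm~12.1]{BPR}; your reduction to the elimination operator ${\rm Elim}_{X_d}$, with the subresultant-based counting and degree bookkeeping and the monotonicity observation for property (ii), is a correct and somewhat more detailed unwinding of that same citation. Your own caveat about matching the quantitative bounds to the specific projection operator in \cite{BPR} is exactly the right place to be careful, and resolving it by quoting the precise statement there is what the paper implicitly does as well.
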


Returning to the problem of decomposing the zero
set of the partitioning polynomial $f$ into monotone patches,
we construct the first-stage CAD for
$\F=\{f\}$, and the patches are the cells
of $\A(\F\cup\E)$ contained in $Z(f)$. 
If the $x_d$-direction is good for $f$, then every cell of 
$\A(\F\cup\E)$ lying in $Z(f)$ is of type~(a), and so if any cell of type~(b) 
lies in $Z(f)$, we choose another random direction and construct the 
first-stage CAD in that direction. Putting everything
together and using Theorem~\ref{t:make-arrg} to bound the
complexity of $\A(\F\cup\E)$, we obtain the following lemma.

\begin{lemma}
\label{l:patches}
Let $f$ be a $d$-variate polynomial of degree $D$,
 and let us assume that the $x_d$-direction is good for $f$.
Then $Z(f)$ can be decomposed, in $D^{O(d^4)}$ time, into
$D^{O(d)}$ monotone patches, and each patch can be 
represented semialgebraically by $D^{O(d^4)}$ polynomials of degree $D^{O(d^3)}$.
\end{lemma}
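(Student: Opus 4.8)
The plan is to combine Lemma~\ref{l:CAD} with the complexity bounds for arrangements in Theorem~\ref{t:make-arrg}, applied to the singleton family $\F=\{f\}$ together with the auxiliary set $\E=\E(\F)$. First I would invoke Lemma~\ref{l:CAD} with $s=1$: it produces a set $\E$ of $O(D^3)$ polynomials in $x_1,\dots,x_{d-1}$, each of degree $O(D^2)$, computable in time $D^{O(d)}$. Then the first-stage CAD is the arrangement $\A(\F\cup\E)$ in $\R^d$; this is an arrangement of $1+O(D^3)=O(D^3)$ polynomials of degree at most $O(D^2)$.

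Next I would apply Theorem~\ref{t:make-arrg} to this family. With $s'=O(D^3)$ polynomials of degree $\Delta'=O(D^2)$, the theorem gives at most $O(1)^d(s'\Delta')^d = O(1)^d\,(D^3\cdot D^2)^d = D^{O(d)}$ cells, and the whole arrangement is computed in time $(s')^{d+1}(\Delta')^{O(d^4)} = (D^3)^{d+1}\cdot (D^2)^{O(d^4)} = D^{O(d^4)}$, with each cell described by at most $D^{O(d^4)}$ polynomials of degree at most $(\Delta')^{O(d^3)} = (D^2)^{O(d^3)} = D^{O(d^3)}$. The monotone patches are, by definition, the cells of $\A(\F\cup\E)$ that are contained in $Z(f)$; there are at most $D^{O(d)}$ of them, each represented as claimed. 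Monotonicity in the $x_d$-direction follows from part~(i)(a) of Lemma~\ref{l:CAD}: since the $x_d$-direction is good for $f$, no cell contained in $Z(f)$ can be of type~(b) (a full vertical segment of $Z(f)$ would contradict goodness), so every such cell is the graph of a continuous semialgebraic function $\xi\colon\tau\to\R$ over a base cell $\tau\subset H$, and a graph meets each vertical line at most once.

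The only real bookkeeping obstacle is keeping the tower of exponents straight: the degrees grow from $D$ to $O(D^2)$ when passing to $\E$, so one must be careful that the exponent-in-$d$ bounds from Theorem~\ref{t:make-arrg} (which are $O(d^4)$ in the running time and $O(d^3)$ in the cell-description degree) absorb this quadratic blow-up without changing the stated $D^{O(d^4)}$ and $D^{O(d^3)}$ forms — which they do, since squaring $D$ only doubles a constant inside the $O(\cdot)$. I should also remark that the construction is algorithmic throughout: computing $\E$ costs $D^{O(d)}$, computing $\A(\F\cup\E)$ costs $D^{O(d^4)}$, and identifying which cells lie in $Z(f)$ is done by testing the sign of $f$ at the representative point supplied with each cell by Theorem~\ref{t:make-arrg}, which is subsumed in the $D^{O(d^4)}$ bound. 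This completes the proof of Lemma~\ref{l:patches}.
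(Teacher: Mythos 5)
Your proof is correct and follows essentially the same route as the paper: construct the first-stage CAD for $\F=\{f\}$ via Lemma~\ref{l:CAD}, bound the resulting arrangement $\A(\F\cup\E)$ with Theorem~\ref{t:make-arrg}, take the cells contained in $Z(f)$ as the patches, and use goodness of the $x_d$-direction to rule out type~(b) cells inside $Z(f)$. Your degree/complexity bookkeeping ($O(D^3)$ polynomials of degree $O(D^2)$ absorbed into the stated exponents) and your explicit argument that a type~(b) cell in $Z(f)$ would force $f$ to vanish on a vertical segment are both sound and in fact spell out details the paper leaves implicit.
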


The first-stage CAD can also be used to compute the projection of 
the intersection of a range
in $\Gamma_{d,\Delta,s}$ with a monotone patch of $f$. 
\iffull
\else
Essentially, this is done
by forming the arrangement of $f$ and the polynomials defining $\gamma$, and
by collecting the monotone patches in this arrangement that are contained in 
$Z(f)$; see the full version~\cite{AMS} for more details.
\fi

\begin{lemma}
\label{l:project} 
Let $\Pi$ be the decomposition of the zero set of a $d$-variate
polynomial $f$ of degree $D$ into 
monotone patches, as described
in Lemma~\ref{l:patches}, and let $\gamma$ be a semialgebraic set in 
$\Gamma_{d,\Delta,s}$, with $\Delta \le D$. For every patch $\pi\in \Pi$,
the projection of $\gamma \cap \pi$ in the $x_d$-direction can be represented 
as a member of $\Gamma_{d-1,\Delta_1,s_1}$, 
\iffull 
i.e., by a Boolean combination
of at most $s_1$ polynomial inequalities in $(d-1)$ variables, each of
degree at most $\Delta_1$, 
\fi
where
$\Delta_1 = D^{O(d^3)}$ and $s_1=(Ds)^{O(d^4)}$.
The representation can be computed in $(Ds)^{O(d^4)}$ time.
\end{lemma}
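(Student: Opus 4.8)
\textbf{Proof plan for Lemma~\ref{l:project}.}
The plan is to follow the same strategy as in Lemma~\ref{l:patches}, but applied to the enlarged set of polynomials that also includes those defining~$\gamma$. Let $g_1,\ldots,g_s$ be the polynomials of degree at most $\Delta\le D$ whose sign conditions (combined by Boolean operations) define~$\gamma$, and set $\F'=\{f,g_1,\ldots,g_s\}$. First I would invoke Lemma~\ref{l:CAD} for $\F'$: this produces an elimination set $\E'=\E(\F')$ of $O((s+1)^2D^3)=O(s^2D^3)$ polynomials in $x_1,\ldots,x_{d-1}$, each of degree $O(D^2)$, and the first-stage CAD is the arrangement $\A(\F'\cup\E')$ in $\R^d$. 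By the refinement property (Lemma~\ref{l:CAD}(ii)), since $\{f\}\subseteq\F'$, we have $\E(\{f\})\subseteq\E'$, so each cell of $\A(\F'\cup\E')$ is contained in a unique patch $\pi\in\Pi$ (recall $\Pi$ consists of the type-(a) cells of $\A(\{f\}\cup\E(\{f\}))$ lying in $Z(f)$, which holds because the $x_d$-direction is good for $f$). Moreover each cell of $\A(\F'\cup\E')$ has a constant sign for each $g_j$, hence lies either entirely inside or entirely outside~$\gamma$.

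The next step is to identify, for a fixed patch $\pi$, the portion $\gamma\cap\pi$. It is the union of those cells $\sigma$ of $\A(\F'\cup\E')$ that are contained in $\pi$ and on which the Boolean condition defining $\gamma$ evaluates to ``true''; this Boolean evaluation is read off from the sign vector of $(g_1,\ldots,g_s)$ at the representative point that Theorem~\ref{t:make-arrg} supplies in each cell. Because each such $\sigma$ is of type (a) — it is the graph $\{(x,\xi(x))\mid x\in\tau\}$ of a continuous semialgebraic function over a cell $\tau$ of $\A(\E')$ in $H$ — its vertical projection is exactly $\tau$, a cell of the $(d-1)$-dimensional arrangement $\A(\E')$. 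Therefore the projection of $\gamma\cap\pi$ is a union of cells of $\A(\E')$, and by Theorem~\ref{t:make-arrg} applied to the $(d-1)$-variate family $\E'$ of $O(s^2D^3)$ polynomials of degree $O(D^2)$, each cell of $\A(\E')$ is itself a semialgebraic set described by at most $(s^2D^3)^{d}\,(D^2)^{O(d^4)}=(Ds)^{O(d^4)}$ polynomials of degree $(D^2)^{O(d^3)}=D^{O(d^3)}$. Taking the disjunction over the relevant cells yields a member of $\Gamma_{d-1,\Delta_1,s_1}$ with $\Delta_1=D^{O(d^3)}$ and $s_1=(Ds)^{O(d^4)}$, as claimed. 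The running time is dominated by constructing $\A(\F'\cup\E')$ (and $\A(\E')$), which by Theorem~\ref{t:make-arrg}, Lemma~\ref{l:CAD}, and the degree/cardinality bounds above takes $(Ds)^{O(d^4)}$ time; locating representative points and evaluating signs of the $g_j$ adds only a polynomial overhead in the same parameters.

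The main technical point — and the place where a little care is needed rather than a genuine obstacle — is verifying that every cell $\sigma$ of $\A(\F'\cup\E')$ that lies in $Z(f)$ is of type (a), so that its projection is a single cell $\tau$ of $\A(\E')$ and not something lower-dimensional or fragmented. This is exactly the reason we assume (and, if it fails, re-randomize) that the $x_d$-direction is good for $f$: goodness guarantees $Z(f)$ has no vertical components, so over each $\tau$ it consists of finitely many graphs, and the CAD construction separates them into type-(a) cells. One must also check that adding the $g_j$'s to the family does not destroy this: it only refines the decomposition of $H$ into smaller cells $\tau'\subseteq\tau$, over each of which the relevant piece of $Z(f)$ is still a graph (a restriction of $\xi$), so patches remain monotone and type-(a) — the refinement property of Lemma~\ref{l:CAD}(ii) is what makes this transparent. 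The remaining bookkeeping — composing the degree bounds through the elimination step ($D\mapsto O(D^2)$) and then through the arrangement-cell description ($O(D^2)\mapsto D^{O(d^3)}$), and similarly for the count of polynomials — is routine substitution into Theorem~\ref{t:make-arrg} and Lemma~\ref{l:CAD}.
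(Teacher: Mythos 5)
Your proposal is correct and follows essentially the same route as the paper's proof: build the first-stage CAD for the enlarged family $\{f,g_1,\ldots,g_s\}$, use the refinement property of Lemma~\ref{l:CAD}(ii) together with the constancy of the signs of the $g_j$ on cells to write $\gamma\cap\pi$ as a union of type-(a) subpatches, observe that their vertical projections are cells of the $(d-1)$-dimensional arrangement of the elimination polynomials, and read off the quantitative bounds from Theorem~\ref{t:make-arrg}. The only cosmetic difference is that you spell out the degree and cardinality bookkeeping and the type-(a) verification in more detail than the paper does.
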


\iffull
\begin{proof}
The task of computing $\gamma_\pi$,  the projection
of $\gamma\cap\pi$, is similar to the operation (A2) discussed
in Section~\ref{sec:cross}. In more abstract terms, 
it can also be viewed as a quantifier elimination task: 
we can represent $\gamma\cap \pi$ by a quantifier-free
formula $\Phi(x_1,\ldots,x_d)$ (a Boolean combination of polynomial
inequalities);
then $\gamma_\pi$ is represented
by $\exists x_d\Phi(x_1,\ldots,x_d)$, and by eliminating
$\exists x_d$ we obtain a quantifier-free formula describing $\gamma_\pi$.
More concretely, we use a procedure based on the first-stage CAD 
(Lemma~\ref{l:CAD}) and the arrangement construction (Theorem~\ref{t:make-arrg}).

By definition, $\gamma$ is a Boolean combination of inequalities
of the form $g_1\ge 0,\ldots,g_s\ge 0$,  where $g_1,\ldots,g_s$
are $d$-variate polynomials, each of degree at most $\Delta\le D$.
We set $\tilde\F:=\{f,g_1,\ldots,g_s\}$, we compute the
set $\tilde\E=\E(\tilde\F)$ of $(d-1)$-variate polynomials
as in Lemma~\ref{l:CAD}, and the first-stage CAD is
then computed as the $d$-dimensional arrangement $\A(\tilde\F\cup\tilde\E)$
according to Theorem~\ref{t:make-arrg}. 
Since by Lemma~\ref{l:CAD}(ii), $\A(\tilde\F\cup\tilde\E)$ refines 
$\A(\{f\}\cup\E(\{f\}))$ (the first-stage CAD from the
preprocessing phase), each patch $\pi\in\Patches$ is
decomposed into subpatches. Since the sign of each $g_i$
is constant on each cell of $\A(\tilde \F)$, and thus on each
cell of $\A(\tilde\F\cup\tilde\E)$, $\gamma\cap\pi$ is a disjoint
union of subpatches. The projections of these subpatches
into $H$ are cells of $\A(\tilde\E)$, and thus we obtain,
in time $(Ds)^{O(d^4)}$, a representation of $\gamma_\pi$ as a member of 
$\Gamma_{d-1,\Delta_1,s_1}$ by Theorem~\ref{t:make-arrg}, where 
$\Delta_1 = D^{O(d^3)}$ and $s_1 =(Ds)^{O(d^4)}$.
\end{proof}
\fi

\iffull
\section{Large Fan-Out Partition Tree: Proof of Theorem~\ref{t:large-r}}
\else
\section{Large Fan-Out Partition Tree}
\fi
\label{sec:range2}

We now describe our second data structure for ${\Gamma}_{d,\Delta,s}$-range 
searching. Compared to the first data structure from Section~\ref{sec:range1},
this one  works on arbitrary point sets, without the $D_0$-general position 
assumption, or, alternatively, without the fuzzy boundary constraint on the output, 
and has slightly better performance bounds.
The data structure is built recursively, and this time the
recursion involves both $n$ and~$d$. 

\subsection{The data structure}
Let $P$ be a set of $n$ points in $\R^d$, and let $\Delta$ and $s$ be
parameters (not assumed to be constant).
The data structure for
$\Gamma_{d,\Delta,s}$-range searching on $P$ is obtained by 
constructing a partition tree $\T$ on $P$ recursively, as above,
except that now the fan-out of each node is larger (and non-constant), and each node 
also stores an auxiliary data structure for handling the respective 
exceptional part. We need to set two parameters: $n_0=n_0(d,\Delta,s)$ 
and $r=r(d,\Delta,s,n)$. Neither of them is a constant in general;
in particular, $r$ is typically going to be a tiny power of~$n$.
The specific values of these parameters will be specified later,
when we analyze the query time.

\iffull
We also note that there is yet another parameter
in Theorem~\ref{t:large-r}, namely, the arbitrarily small
constant $\eps>0$ entering the preprocessing
time bound. However, $\eps$ enters the construction solely
by the requirement that $r$ should be chosen smaller than
$n^{\eps/c}$, for a sufficiently large constant~$c$.
It will become apparent later in the analysis that $r\le n^{\eps/c}$
can be assumed, provided that some other parameters are
taken sufficiently large; we will point this out at 
suitable moments.
\fi

When constructing the partition tree $\T$ on an $n$-point
set $P$, we distinguish two cases.  For $n\le n_0$, $\T$ consists
of a single leaf storing all points of $P$. 
For $n>n_0$,
we construct an $r$-partitioning polynomial $f$ of degree 
$D=O(r^{1/d})$, the partition 
$\thePartit$ of $\R^d$ induced by $f$, and the partition of $P$ into 
the exceptional part $P^*$ and regular parts $P_1,\ldots,P_t$, 
where $t=O(r)$. Set $n^*=|P^*|$ and $n_i = |P_i|$, 
for $i=1,\ldots,t$. The root of $\T$ stores $f$, $\thePartit$, and 
the total weight $w(P_i)$ of each regular part $P_i$ of $P$, 
as before.  Still in the same way as before, we recursively 
preprocess each regular part $P_i$ for $\Gamma_{d,\Delta,s}$-range 
searching (or stop if $|P_i|\le n_0$), 
and attach the resulting data structure to the root 
as a respective subtree. 

\heading{Handling the exceptional part.}
A new feature of the second data structure is that we also preprocess 
the exceptional set $P^*$ into an auxiliary data structure, 
which is stored at the root. Here we recurse on the dimension,
exploiting the fact that $P^*$ lies on the algebraic variety
$Z(f)$ of dimension at most $d-1$.

We choose a random direction $v$ and rotate the coordinate system so that
$v$ becomes the direction of the $x_d$-axis.
We construct the first-stage CAD adapted to $\{f\}$,
according to Lemma~\ref{l:CAD} and Theorem~\ref{t:make-arrg}.
We check whether all the patches are $x_d$-monotone, i.e.,
of type (a) in Lemma~\ref{l:CAD}(i); if it is not the case,
we discard the CAD and repeat the construction, with a different random 
direction.
This yields a decomposition of $Z(f)$ into a set $\Patches$ of 
$D^{O(d)}$ monotone patches, and the running time is $D^{O(d^4)}$ with 
high probability. 

Next, we  distribute the points of $P^*$ among the patches:
for each patch $\pi\in\Patches$,  let $P^*_\pi$ 
denote the projection of $P^*\cap\pi$ onto the coordinate hyperplane 
$H=\{x\in\R^d\,\mid\, x_d=0\}$.
We preprocess each set $P^*_\pi$ for 
$\Gamma_{d-1,\Delta_1,s_1}$-range searching. Here $s_1=(Ds)^{O(d^4)}$ is 
the number of polynomials defining a range and $\Delta_1=D^{O(d^3)}$ is
their maximum degree; the constants hidden 
in the $O(\cdot)$ notation are the same as in 
Lemma~\ref{l:project}.
\iffull
For simplicity, we treat all patches as being $(d-1)$-di\-mension\-al 
(although some may be of lower dimension); this 
does not influence the worst-case performance analysis.
\fi

The preprocessing of the sets $P^*_\pi$ is done recursively, 
using an $r_1$-partitioning polynomial in $\R^{d-1}$, for a 
suitable value of~$r_1$.
The exceptional set at each node of the resulting 
``$(d-1)$-di\-mension\-al'' tree is handled in a similar manner, 
constructing an auxiliary data structure in $d-2$ dimensions,
based on a first-stage CAD, and storing it 
at the corresponding node. The recursion on $d$ bottoms out at 
dimension $1$, where the structure is simply a standard binary 
search tree over the resulting set of points on the $x_1$-axis.
\iffull
  We remark that the treatment of the top level of 
  recursion on the dimension will be somewhat different from that of deeper 
  levels, in terms of both the choice of parameters and the 
  analysis; see below for details.
\fi

This completes the description of the data structure, except for the 
choice of $r$ and $n_0$,
which will be provided 
later as we analyze the 
performance of the algorithm.

\heading{Answering a query.} Let us assume that, for a given $P$,
the data structure for ${\Gamma}_{d,\Delta,s}$-range searching,
as described above, has been constructed, and 
consider a query range $\gamma \in {\Gamma}_{d,\Delta,s}$.
The query is answered in the same way as before, by visiting 
the nodes of the partition tree 
$\T$ in a top-down manner, except that, at each node
that we visit, we also query with $\gamma$ the auxiliary data 
structure constructed on the exceptional set $P^*$ for that node. 

Specifically, for each patch $\pi$ of the corresponding
collection $\Patches$, we compute 
$w_\pi$, the weight of $P^* \cap (\gamma\cap\pi)$.
If $\gamma\cap\pi = \emptyset$ then $w_\pi=0$, and
if $\gamma\cap\pi = \pi$ then $w_\pi$ is the total weight of 
$P^*\cap \pi$. 
Otherwise, i.e., if $\gamma$ crosses $\pi$, then 
$w_\pi$ is the same as the weight of $P^*_\pi \cap \gamma_\pi$,
where $\gamma_\pi$ is the $x_d$-projection of $\gamma\cap\pi$,
because $\pi$ is $x_d$-monotone.
By Lemma~\ref{l:project},
$\gamma_\pi \in \Gamma_{d-1,\Delta_1,s_1}$ and can be constructed in
$(Ds)^{O(d^4)}$ time. 
We can find the weight of $\gamma_\pi\cap P^*_\pi$ 
by querying the auxiliary data structure for $P^*_\pi$ with $\gamma_\pi$.
We then add $w_\pi$ to the global count maintained by the query procedure.
This completes the description of the query procedure.

\subsection{Performance analysis}
\iffull
The analysis of the storage requirement and preprocessing time
is straightforward, and will be provided later. We begin with
the more intricate analysis of the query time.
For now we assume that $n_0$ and $r$ have been fixed; the analysis will
later specify their values.
\else
A straightforward analysis 
shows that the size of the data structure is 
linear and that it can be constructed in time  $O(n^{1+\eps})$, for any 
constant $\eps > 0$, by choosing $r$ sufficiently large,
so we focus on analyzing the query time.
\fi

Let $\qtime_d(n,\Delta,s)$ denote the maximum overall query time
for ${\Gamma}_{d,\Delta,s}$-range searching on a set of $n$ 
points in $\R^d$. 
For $n \le n_0$ and $d \ge 1$, $\qtime_d(n,\Delta,s)=O(n)$. 
For $d=1$ and $n > n_0$, $\qtime_1(n,\Delta,s)=O(\Delta s\log n)$
because any range in $\Gamma_{1,\Delta,s}$
is the union of at most $\Delta s$
intervals. 
Finally, for $d > 1$ and $n > n_0$, an analysis similar to the one in Section~\ref{sec:range1}
gives the following recurrence for $\qtime_d(n,\Delta,s)$:
\iffull
\begin{equation}
\qtime_d(n,\Delta,s)\le  
C\Delta s r^{1-1/d} \qtime_d(n/r,\Delta, s) + 
  \sum_{\pi \in \Patches} \qtime_{d-1}(n_\pi,\Delta_1,s_1) + r^{c} ,
\label{eq:qtime}
\end{equation}
\else
\begin{eqnarray}
\qtime_d(n,\Delta,s)&\le&
C\Delta s r^{1-1/d} \qtime_d(n/r,\Delta, s)\nonumber\\
	&& + 
  \sum_{\pi \in \Patches} \qtime_{d-1}(n_\pi,\Delta_1,s_1) + r^{c} ,
\label{eq:qtime}
\end{eqnarray}
\fi
where $c=d^{O(1)}$, $C$ is a constant depending on~$d$,
$\sum_\pi n_\pi \le n$,  and both
$|\Patches|$ and  $\Delta_1s_1$  are bounded by $(Ds)^{a_d}$ 
with $D=O(r^{1/d})$ and $a_d=O(d^4)$.
(These are rather crude estimates, but we prefer simplicity.)
The leading term of the recurrence relies on 
the crossing-number bound given in Lemma~\ref{l:cr}.
In order to apply that lemma,
we need that $r\ge \Delta^d$, which will be ensured by the choice of
$r$ given below.
The second term corresponds to querying the auxiliary data structures
for the exceptional set $P^*$. The last term covers  
the time spent in computing the cells of the polynomial partition crossed
by the query range $\gamma$ and for computing the projections $\gamma_\pi$ 
for every $\pi \in \Pi$; here we assume that the choice of $r$ will be such that
$r \ge Ds$.

Ultimately, we want to derive that if $\Delta, s$ are constants, 
the recurrence (\ref{eq:qtime}) implies, 
with a suitable choice of $r$ and $n_0$ at each stage, 
\begin{equation} \label{qt:good}
\qtime_d(n,\Delta,s)\le n^{1-1/d}\log^{B(d,\Delta,s)}n ,
\end{equation}
where $B(d,\Delta,s)$ is a constant depending on $d, \Delta$, and $s$.

However, as was already mentioned, even if $\Delta, s$ are constants 
initially, later in the recursion 
they are chosen as tiny powers of $n$, 
and this makes it hard to obtain a direct 
inductive proof 
of~(\ref{qt:good}). Instead, we proceed in two stages.
\iffull
First, in Lemma~\ref{l:weakerQ} below we derive,  without assuming
$\Delta, s$ to be constants, a weaker bound for $\qtime_d(n,\Delta,s)$,
for which the induction is
easier. Then we obtain the stronger bound (\ref{qt:good}) for constant
values of $\Delta, s$
by using the weaker bound for the $(d-1)$-dimensional queries
on the exceptional parts, i.e., for the second term
in the recurrence~(\ref{eq:qtime}).

\heading{A weaker bound for lower-dimensional queries. }
\begin{lemma} \label{l:weakerQ}
For every $\auxeps>0$ there exists  $A_{d,\auxeps}$  such that,
with a suitable choice of $r$ and $n_0$,
\begin{equation}\label{qt:bad}
\qtime_d(n,\Delta,s) \le (\Delta s)^{A_{d,\auxeps}}
n^{1-1/d+\auxeps}
\end{equation}
for all $d,n,\Delta,s$ (with $\Delta s\ge 2$, say).
\end{lemma}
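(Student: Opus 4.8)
The plan is to prove~(\ref{qt:bad}) by a double induction: an outer induction on the dimension $d$, and, for each fixed $d$, an inner induction on $n$ driven by the recurrence~(\ref{eq:qtime}). It suffices to treat small $\auxeps$, say $\auxeps\le 1/(2d)$, since the right-hand side of~(\ref{qt:bad}) is nondecreasing in $\auxeps$, so a bound for a small value yields the same bound (with the same $A$) for all larger values. The base case $d=1$ is immediate: for $n>n_0$ we have $\qtime_1(n,\Delta,s)=O(\Delta s\log n)$, and since $1-1/d=0$ and $\log n\le C_\auxeps n^{\auxeps}$ for a suitable constant $C_\auxeps$, this is at most $(\Delta s)^{A_{1,\auxeps}}n^{\auxeps}$ once $A_{1,\auxeps}$ is chosen large enough; for $n\le n_0$ one uses $\qtime_1=O(n)$ and $n\le n_0$. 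For every $d$, the range $n\le n_0$ similarly follows from $\qtime_d=O(n)$ together with $n\le n_0$, provided $A_{d,\auxeps}$ is large enough relative to $d$, $\auxeps$, and the (constant) degree of the polynomial $n_0$ in $\Delta s$.

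For the parameters I would take $r=n^{\beta}$, where $\beta=\beta(d,\auxeps)>0$ is a small constant fixed below (this choice is also compatible with the later proof of~(\ref{qt:good})), and let $n_0=n_0(d,\Delta,s)$ be a sufficiently large polynomial in $\Delta s$, large enough that for all $n>n_0$ the quantity $r=n^{\beta}$ exceeds $(2C\Delta s)^{1/\auxeps}$, exceeds $\Delta^d$, and exceeds $Ds$ (with $D=O(r^{1/d})$), so that Lemma~\ref{l:cr} applies and the term $r^c$ in~(\ref{eq:qtime}) is legitimate; here $C$ is the constant from~(\ref{eq:qtime}). Writing $A=A_{d,\auxeps}$, I would bound the three summands of~(\ref{eq:qtime}) separately against $\tfrac12(\Delta s)^A n^{1-1/d+\auxeps}$, $\tfrac14(\Delta s)^A n^{1-1/d+\auxeps}$, and $\tfrac14(\Delta s)^A n^{1-1/d+\auxeps}$.

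The first summand is handled by the inner inductive hypothesis at size $n/r<n$: it is at most $C\Delta s\,r^{1-1/d}(\Delta s)^A(n/r)^{1-1/d+\auxeps}=C(\Delta s)^{A+1}r^{-\auxeps}n^{1-1/d+\auxeps}\le\tfrac12(\Delta s)^A n^{1-1/d+\auxeps}$, by the choice $C\Delta s\,r^{-\auxeps}\le\tfrac12$. The second summand is handled by the outer inductive hypothesis in dimension $d-1$, applied with the same $\auxeps$: since $n_\pi\le n$ for every patch and $|\Patches|,\Delta_1 s_1\le(Ds)^{a_d}$ with $D=O(r^{1/d})=O(n^{\beta/d})$, the sum is at most $(Ds)^{a_d}\big((Ds)^{a_d}\big)^{A_{d-1,\auxeps}}n^{1-1/(d-1)+\auxeps}\le (\text{const})\cdot s^{a_d(A_{d-1,\auxeps}+1)}\cdot n^{\,1-1/(d-1)+\auxeps+\beta a_d(A_{d-1,\auxeps}+1)/d}$. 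The crucial point is the gap $1-1/(d-1)=1-1/d-\tfrac1{d(d-1)}$: choosing $\beta$ small enough (as a function of $d$ and of $A_{d-1,\auxeps}$) so that the excess exponent $\beta a_d(A_{d-1,\auxeps}+1)/d$ is at most $\tfrac1{2d(d-1)}$, the exponent of $n$ drops to $1-1/d+\auxeps-\tfrac1{2d(d-1)}$, and the residual factor $n^{-1/(2d(d-1))}$ absorbs the pure constant for $n\ge n_0$, while $s^{a_d(A_{d-1,\auxeps}+1)}\le(\Delta s)^{a_d(A_{d-1,\auxeps}+1)}$ is absorbed into $(\Delta s)^A$ as soon as $A\ge a_d(A_{d-1,\auxeps}+1)$. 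Thus the second summand is at most $\tfrac14(\Delta s)^A n^{1-1/d+\auxeps}$. Finally, $r^c=n^{\beta c}$ with $\beta c<1-1/d$ (again by taking $\beta$ small, since $c=d^{O(1)}$), so $r^c\le n^{1-1/d}\le\tfrac14(\Delta s)^A n^{1-1/d+\auxeps}$ for $n\ge n_0$. Adding the three bounds closes the inner induction, and picking $A_{d,\auxeps}$ to satisfy simultaneously the finitely many ($d$- and $\auxeps$-dependent) lower bounds accumulated above — including those from the two base cases — completes the outer induction.

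The only real work here is the bookkeeping: the constants must be fixed in the order $\auxeps$, then $\beta$, then $n_0$, then $A_{d,\auxeps}$, and one must check that $\beta$ and $n_0$ are chosen before (and independently of) $A_{d,\auxeps}$, and that $A_{d,\auxeps}$ is then large enough to dominate the $\Delta s$-polynomial blow-ups coming from the second and third summands and from the trivial range $n\le n_0$. In contrast to the proof of the sharper estimate~(\ref{qt:good}), no delicate extraction of a saving power of $n$ in the main recursion is needed, which is exactly why this weaker bound is the easy one; and it is this bound that will later be substituted for $\qtime_{d-1}(n_\pi,\Delta_1,s_1)$ in the second summand of~(\ref{eq:qtime}) when proving~(\ref{qt:good}) for constant $\Delta,s$.
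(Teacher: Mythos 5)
Your proof is correct and follows essentially the same route as the paper's: a double induction on $d$ and $n$, closing the first term of (\ref{eq:qtime}) via $C\Delta s\,r^{-\auxeps}\le\frac12$ and absorbing the exceptional-part term into the dimension gap $n^{-1/(d(d-1))}$. The only difference is cosmetic: you take $r=n^{\beta}$ throughout (so that $n_0$ is fixed before $A_{d,\auxeps}$), whereas the paper takes $r=(2C\Delta s)^{1/\auxeps}$ and $n_0=(\Delta s)^{dA_{d,\auxeps}}$; both choices close the induction in the same way.
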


\heading{Remarks.} (i) This lemma may look similar to our first result
on $\Gamma_{d,\Delta,s}$-range searching, Theorem~\ref{t:const-r},
but there are two key differences---the lemma works for arbitrary
point sets, with no general position assumption, and
$\Delta$ and $s$ are not assumed to be constants.

\medskip
\noindent
(ii) Since query time $O(n)$ is trivial to achieve,
we may assume $\auxeps< 1/d$, for otherwise, the bound
(\ref{qt:bad}) in the lemma exceeds $n$. 
\medskip

\begin{proof}
The case $d=1$ is trivial because $Q_1(n,\Delta,s)\le C\Delta s\log_2 n$
clearly implies~(\ref{qt:bad}), assuming that
$A_{d,\auxeps} \ge 1+ \log_2 C$ and that $n$ is sufficiently large so that
$\log_2 n\le n^{\auxeps}$.
We assume that (\ref{qt:bad}) holds up to dimension $d-1$
(for all $\auxeps>0$, $\Delta$, $s$, and $n$), and we establish
it for dimension $d$ by induction on~$n$. We consider
$A_{d,\auxeps}$ yet unspecified but sufficiently large;
from the proof below one can obtain an explicit
lower bound that~$A_{d,\auxeps}$ should satisfy.
We set 
$$n_0=n_0(d,\Delta,s,\auxeps):= (\Delta s)^{d A_{d,\auxeps}}
\quad\mbox{and}\quad r=(2C\Delta s)^{1/\auxeps}.$$
This value of $n_0$ is roughly the threshold where the bound (\ref{qt:bad})
becomes smaller than $n$.
Since we assume $\auxeps<1/d$,
our choice of $r$ satisfies the assumptions $r\ge \Delta^d$ and
$r \ge Ds$, as needed in (\ref{eq:qtime}).

In the inductive step, for $n \le n_0$, 
$$Q_d(n,\Delta,s)  \le n \le n_0^{1/d}n^{1-1/d}
	= (\Delta s)^{A_{d,\auxeps}}n^{1-1/d} 
	\le (\Delta s)^{A_{d,\auxeps}}n^{1-1/d+\auxeps}.$$ 
So 
we assume that $n > n_0$ and that the bound (\ref{qt:bad}) holds for 
all $n'<n$. Using the induction hypothesis,
i.e., plugging (\ref{qt:bad}) into the recurrence (\ref{eq:qtime}), we obtain
\begin{align}
\qtime_d(n,\Delta,s) 
& \le C \Delta s r^{1-1/d}(\Delta s)^{A_{d,\auxeps}}
	(n/r)^{1-1/d+\auxeps} 
+ |\Pi| (\Delta_1s_1)^{A_{d-1,\auxeps}} n^{1-1/(d-1)+\auxeps} + r^{c}.
\label{eq:induct}
\end{align}
By the choice of $r$, the first term of the right-hand side
of (\ref{eq:induct}) can be bounded by
$$C\Delta s r^{-\auxeps} (\Delta s)^{A_{d,\auxeps}}n^{1-1/d+\auxeps}=
 \frac 12  (\Delta s)^{A_{d,\auxeps}}n^{1-1/d+\auxeps},$$ 
which is half of the bound we are aiming for.

Next, we bound the second term. We use the estimates 
$\Delta_1 s_1\le (Ds)^{a_d}$, $|\Patches|\le (Ds)^{a_d}$, and 
$Ds\le r$.
Then
\begin{eqnarray}
|\Pi|(\Delta_1s_1)^{A_{d-1,\auxeps}}
n^{1-1/(d-1)+\auxeps}&\le &
r^{a_d (A_{d-1,\auxeps}+1)} n^{1-1/(d-1)+\auxeps}\nonumber\\
&\le & \frac{r^{a_d (A_{d-1,\auxeps}+1)}}{ n^{1/d(d-1)}}\cdot n^{1-1/d+\auxeps}.
\label{eq:q-aux}
\end{eqnarray}
We choose 
\begin{equation}
A_{d,\auxeps} = \frac{d-1}{\auxeps}a' a_d(A_{d-1,\auxeps}+1),
\label{eq:Ad}
\end{equation}
where $a'=\log_2(4C)$;
i.e., we choose $A_{d,\auxeps}=d^{\Theta(d)}/\auxeps^d$.
Since $n\ge n_0= (\Delta s)^{d A_{d,\auxeps}}$ and $r=(2C\Delta s)^{1/\auxeps}$,
the fraction in (\ref{eq:q-aux}) can be bounded by
\begin{align*}
\frac{r^{a_d (A_{d-1,\auxeps}+1)}}{ n^{1/d(d-1)}} &\le
	\frac{(2C\Delta s)^{A_{d,\auxeps}/a'(d-1)}}{(\Delta
s)^{A_{d,\auxeps}/(d-1)}}
	\le \left ( \frac{2C}{(\Delta s)^{a'-1}}\right )^{A_{d,\auxeps}/a'(d-1)}
	\le 1
\end{align*}
because $\Delta s\ge 2$.

Finally, recall that $c=d^{O(1)}$, so our choice of $A_{d,\auxeps}$ 
(again, choosing $a'$ sufficiently large) ensures that $r^c < n^{1-1/d}$.
Hence, the right hand side in (\ref{eq:induct}) is bounded by 
$$
\tfrac12 (\Delta s)^{A_{d,\auxeps}}n^{1-1/d+\auxeps} +
2n^{1-1/d+\auxeps} \le
(\Delta s)^{A_{d,\auxeps}}n^{1-1/d+\auxeps} , 
$$
as desired.
This establishes the induction step and thereby completes the
proof of the lemma.
\end{proof}

\heading{The improved bound for the query time. }
Now we want to obtain the improved bound (\ref{qt:good}), i.e.,
$\qtime_d(n,\Delta,s)\le n^{1-1/d}\log^{B}n$, with $B=B(d,\Delta,s)$,
assuming that $\Delta, s$ are constants and $n> 2$.
To this end, in the top-level ($d$-dimensional) partition tree, 
we set $r:=n^\delta$, where $\delta>0$ is a suitable small 
constant to be specified later.
Then we  use the result of Lemma~\ref{l:weakerQ} with 
$\auxeps:=\frac 1{2d(d-1)}$
for processing the $(d-1)$-dimensional queries
on the sets $P^*_\pi$. Thus, in the
forthcoming proof, we do induction only on $n$, while
$d$ is fixed throughout. 

We choose $n_0=n_0(d,\Delta,s)$ sufficiently large (we will specify this more
precisely later on),
and we assume that $n>n_0$ and that the desired bound (\ref{qt:good}) holds
for all $n'<n$.
In the inductive step we estimate, using the 
recurrence (\ref{eq:qtime}), the induction hypothesis, and the bound
in (\ref{qt:bad}),
\[
\qtime_d(n,\Delta,s)  \le 
C\Delta s r^{1-1/d} (n/r)^{1-1/d} \log^{B}(n/r) 
+|\Pi|(\Delta_1 s_1)^{A_{d-1,\auxeps}} n^{1-1/(d-1)+\auxeps}
  + r^{c}.
\]

The first term simplifies to $(1-\delta)^B C\Delta sn^{1-1/d}\log^B n$.
Thus, if we choose $B$ depending on $\delta$ (which is a small
positive constant still to be determined) so that
$(1-\delta)^BC\Delta s\le \frac 12$, then the first term will be
at most half of the target value $n^{1-1/d}\log^B n$.
Thus, it suffices to set $\delta$ so that the remaining two terms
are negligible compared to this value.

For the $r^{c}$ term, any $\delta\le 1/2c$ will do.
The second term can be bounded, as in the proof
of Lemma~\ref{l:weakerQ}, by 
$$
r^{a_d(A_{d-1,\auxeps}+1)}n^{1-1/(d-1)+\auxeps} =
\frac{r^{\auxeps A_{d,\auxeps}/(a'(d-1))}} {n^\nu}\cdot n^{1-1/d}.
$$
Thus, with $\delta\le a'(d-1)/A_{d,\auxeps}$,
the term is at most $n^{1-1/d}$. 
Again, this establishes the induction step and
concludes the proof of the final bound for the query time. 
We remark that our choice of $\delta$ requires us to choose
$$
B \approx \frac{1}{\delta} \ln (2C\Delta s) \approx
\ln (2C\Delta s) \cdot d^{\Theta(d)} ,
$$
making its dependence on $d$ super-exponential.
\else
First we derive a weaker bound for $\qtime_d(n,\Delta,s)$ without assuming
$\Delta, s$ to be constants. Namely, we prove that for every 
constant $\auxeps>0$ there exists  a constant 
$A_{d,\auxeps}$  such that, with a suitable choice of $r$ and $n_0$,
\begin{equation}\label{qt:bad}
\qtime_d(n,\Delta,s) \le (\Delta s)^{A_{d,\auxeps}}
n^{1-1/d+\auxeps}
\end{equation}
for all $d,n,\Delta,s$ (with $\Delta s\ge 2$, say). We can
assume that $\nu \le 1/d$ because otherwise the query time is
trivially $O(n)$. We choose $r=(2C\Delta s)^{1/\auxeps}$,
which ensures that $r\ge(\Delta s)^d$.

Next, we derive the stronger bound (\ref{qt:good}) for constant
values of $\Delta, s$
by using this weaker bound for the $(d-1)$-dimensional queries
on the projected exceptional parts, i.e., for the second term
in the recurrence~(\ref{eq:qtime}). In this stage, we choose
$r=n^\delta$ for a sufficiently small constant $\delta >0$.
Our choice of $\delta$ and $A_{d,\auxeps}$ implies that
$B=d^{O(d)}$.  Additional
details can be found in the full version~\cite{AMS}.
\fi

\iffull
\heading{Analysis of storage and preprocessing.}
 Let $\storage_d(n,\Delta,s)$ denote the size of the data
structure on $n$ points in $\reals^d$ for ${\Gamma}_{d,\Delta,s}$-range 
searching, with the settings of $r$ and $n_0$ as described above.
For $n\le n_0=n_0(d,\Delta,s)$ we have $\storage_d(n,\Delta,s)=O(n)$. 
For larger values of $n$, the space occupied by the root 
of the partition tree, not counting the auxiliary
data structure for the exceptional part $P^*$, is bounded by
$r^{c}$, where $c=d^{O(1)}$.
Furthermore, since $\storage_d(n,\Delta,s)$ is at least linear in $n$, 
the total size of the  auxiliary data structure constructed on $P^*$ is 
$\sum_{\pi\in\Patches} \storage_{d-1}(n_\pi,\Delta_1,s_1) \le 
\storage_{d-1}(n^*,\Delta_1,s_1)$, where $n^*=|P^*|$. 
We thus obtain the following recurrence for $\storage_d(n,\Delta,s)$:
$$
\storage_d(n,\Delta,s) \le 
	\displaystyle \sum_{i=1}^t \storage_d(n_i,\Delta,s) +
        \storage_{d-1}(n^*,\Delta_1,s_1) + 
		O(r^{c}) 
$$
for $n>n_0=n_0(d,\Delta,s)$, and $\storage_d(n,\Delta,s)=O(n)$
for $n\le n_0$.
Using $n_i\le n/r$, $n^* + \sum_i n_i \le n$, and 
$r^{c} =o(n)$, for both types of choices of $r$, the recurrence 
easily leads to 
$$\storage_d(n,\Delta,s)= O(n) ,$$
where the constant of proportionality depends on $d$.

It remains to estimate the preprocessing time; here, finally,
the parameter $\eps>0$ in Theorem~\ref{t:large-r} comes into
play. Let $\delta^*$ be a constant such that $r\le n^{\delta^*}$
(at all stages of the algorithm). As was remarked in the preceding
analysis of the query time, we can make $\delta^*$ arbitrarily
small, by adjusting various constants (and, generally speaking
and as already remarked above,
the smaller $\delta^*$, the worse constant $B(d,\Delta,s)$
we obtain in the query time bound).

Let $\ptime_d(n,\Delta,s,\delta^*)$ denote the maximum preprocessing time
of our data structure for $\Gamma_{d,\Delta,s}$-range
searching on $n$ points, with $\delta^*>0$ a constant as above.
Using the operation (A1) of 
Section~\ref{sec:cross}, we spend $O(nr^{c})$ time 
to compute $\thePartit(f)$ and the partition of $P$ into the
exceptional part and the regular parts, and we
spend additional $O(nr^{c})$ time to compute $\Patches$ and $P^*_\pi$ 
for every $\pi \in \Patches$, where $c=d^{O(1)}$.
The total time spent in constructing 
the secondary data structures for all patches of $\Patches$ is bounded by 
$\ptime_{d-1}(n^*,\Delta_1,s_1,\delta^*)$.  Hence, we obtain the recurrence 
$$\ptime_d(n,\Delta,s,\delta^*)\le 
	\displaystyle \sum_{i=1}^t \ptime_d(n_i,\Delta,s,
\delta^*)+\ptime_{d-1}(n^*,\Delta_1,s_1,\delta^*) + 
	O(nr^{c}) 
$$
for $n>n_0$, and $\ptime_d(n,\Delta,s,\delta^*)=O(n)$ for $n\le n_0$.
Using the properties  $n_i\le n/r$ and $n^* + \sum_i n_i \le n$, 
a straightforward calculation shows that
$$\ptime_d(n,\Delta,s,\delta^*) = O(n^{1+c\delta^*}) ,$$
where the constant of proportionality depends on $d$.
Hence, by choosing $\delta^*=\eps/c$, the preprocessing time 
is $O(n^{1+\eps})$.
This concludes the proof of Theorem~\ref{t:large-r}.
\else
This concludes the proof of Theorem~\ref{t:large-r}.
\fi

\section{Open Problems}
\label{sec:concl}

We conclude this paper by mentioning a few open problems.

\medskip

\noindent (i) 
A very interesting and challenging problem is, in our opinion,
the fast-query case of range searching with constant-complexity
semialgebraic sets, where the goal is to answer a query in $O(\log n)$ time using
roughly $n^d$ space. 
There are actually two, apparently distinct, issues.
The standard approach to fast-query searching is to parameterize
the ranges in $\Gamma$ by points in a space of a suitable dimension, say $t$;
then the $n$ points of $P$ correspond to $n$  algebraic surfaces in this
$t$-dimensional ``parameter space'', and a query is answered by locating the 
point corresponding to the query range in the arrangement of these
surfaces.

First, the arrangement has $O(n^t)$ combinatorial complexity,
and one would expect to be able to locate points in
it in polylogarithmic time with storage about $n^t$.
However, such a method is known only up to dimension $t=4$,
and in higher dimension, one again gets stuck at the 
arrangement decomposition problem, which was the bottleneck
in the previously known solution of \cite{AM} for the
low-storage variant, as was mentioned in the introduction. It would be
nice to use polynomial partitions to obtain a better
point location data structure for such arrangements,
but unfortunately, so far all of our attempts in this direction
have failed.

The second issue is, whether the point location approach
just sketched is actually optimal. This question is exhibited
nicely already in the simple instance of range searching with disks in
the plane.  The best known solution that
guarantees logarithmic query time uses point
location in $\R^3$ and  requires  storage roughly $n^3$,
but it is conceivable that roughly quadratic storage might
suffice.

\bigskip

\noindent (ii) 
Our range-searching data structure for arbitrary point
sets---the one with large fan-out---is so complex and has
a rather high exponent in the polylogarithmic factor, because we
have difficulty with handling highly degenerate point sets,
where many points lie on low-degree algebraic surfaces. 
This issue
appears even more strongly in combinatorial applications,
and in that setting it has been dealt with only in rather
specific cases (e.g., in dimension 3); see \cite{KMSS,SoTa,Za} 
for  initial studies.  
It would be nice to find a
construction of suitable ``multilevel polynomial partitions''
that would cater to such highly degenerate input sets,
as touched upon in \cite{KMSS,Za}.

\bigskip

\noindent (iii) 
Another open problem, related to the construction
of polynomial partitions, is the fast evaluation of a 
multivariate polynomial at many points, as briefly discussed at 
the end of Section~\ref{sec:algo}.

\paragraph{Acknowledgments.} We thank the anonymous referees for their useful comments on the paper.


\end{document}